\tikzset{file/.style={regular polygon, regular polygon sides=6, draw, very thick, fill=gray!20,inner sep=6pt}}
\tikzset{printer/.style={trapezium, trapezium left angle=90, shape border rotate=90, draw, very thick, fill=gray!20}}
\tikzset{group/.style={trapezium, draw, very thick, fill=gray!20, shape border rotate=180}}
\tikzset{folder/.style={rectangle, draw, very thick, fill=gray!20}}
\tikzset{project/.style={circle, draw, very thick, fill=gray!20,inner sep=4pt}}
\tikzset{user/.style={trapezium, draw, very thick, fill=gray!20}}
\tikzset{elbl/.style={inner sep=2pt,fill=red!10,font=\footnotesize,font=\sffamily}}
\tikzset{loop on left/.style={to path={.. controls +(-1,1) and +(-1,-1) .. (\tikztotarget) \tikztonodes}}}
\tikzset{loop on right/.style={to path={.. controls +(1,1) and +(1,-1) .. (\tikztotarget) \tikztonodes}}}
\tikzset{loop below/.style={to path={.. controls +(-1,-1) and +(1,-1) .. (\tikztotarget) \tikztonodes}}}
\newcommand{\interestahead}{*}
\newcommand{\interestdhead}{square}
\newcommand{\semantics}[1]{\llbracket #1 \rrbracket}
\newtheorem{definition}{Definition}
\newtheorem{example}{Example}
\newtheorem{remark}{Remark}
\newtheorem{corollary}{Corollary}
\newtheorem{proposition}{Proposition}
\newcommand{\set}[1]{\ensuremath{\left\{#1\right\}}} 
\newcommand{\card}[1]{\left| #1 \right|}
\newcommand{\RcoR}{\ensuremath{\widetilde{R}}} 
\newcommand{\entity}[1]{\ensuremath{#1}} 
\newcommand{\rel}[1]{\ensuremath{#1}} 
\newcommand{\comp}{\mathbin{;}} 
\newcommand{\princ}[1]{\textsf{#1}} 
\newcommand{\ppmc}{\ensuremath{\phi}} 
\newcommand{\npmc}{\ensuremath{\psi}} 
\newcommand{\pmp}{\ensuremath{\rho}} 
\newcommand{\act}[1]{\textsf{#1}} 
\newcommand{\crs}{\ensuremath{\chi}} 
\newcommand{\crso}[1]{\textsf{#1}} 
\newcommand{\defo}[1]{#1} 
\newcommand{\audita}[1]{\ensuremath{#1^\oplus}}
\newcommand{\auditd}[1]{\ensuremath{#1^\ominus}}
\newcommand{\interesta}{\ensuremath{i^\oplus}}
\newcommand{\interestb}{\ensuremath{i^\ominus}}
\newcommand{\setinterest}{\ensuremath{\set{i^\oplus,i^\ominus}}}
\begin{document}

\title{Relationships, Paths and Principal Matching:\\A New Approach to Access Control}

\author{Jason Crampton}
\author{James Sellwood}
\affil{Information Security Group\\ Royal Holloway, University of London}


\maketitle
\begin{abstract}
  Recent work on relationship-based access control has begun to show how it can be applied to general computing systems, as opposed to simply being employed for social networking applications. The use of relationships to determine authorization policies enables more powerful policies to be defined than those based solely on the commonly used concept of role membership.

  The \underline{r}elationships, \underline{p}aths and \underline{p}rincipal \underline{m}atching (RPPM) model described here is a formal access control model using relationships and a two-stage request evaluation process. We make use of path conditions, which are similar to regular expressions, to define policies. We then employ non-deterministic finite automata to determine which policies are applicable to a request.

  The power and robustness of the RPPM model allows us to include contextual information in the authorization process (through the inclusion of logical entities) and allows us to support desirable policy foundations such as separation of duty and Chinese Wall. Additionally, the RPPM model naturally supports a caching mechanism which has significant impact on request evaluation performance.
\end{abstract}


\section{Introduction}\label{sec:intro}
In modern computing there are very few computer systems where every user of that system is required to be able to perform all possible actions on all possible resources.
More usually there is a need to selectively limit the actions which can be performed on resources; access control is the security service which provides this capability.
The reasons why such limitations are required vary depending on the computer system.
There may simply be a distinction between configuration (administrative) functions versus operation (user) functions, or between a write mode and a read mode.
There may be a need to isolate the data belonging to each individual user from every other user, or there may be more complex requirements based on concepts such as security clearance level, job role or previous activity.

In many situations, access control is policy-based: interactions between users and resources are modeled as ``requests'' and the policy specifies (either implicitly or explicitly) which requests are to be granted and which denied.
An access control system is based on an access control model, which will define the data structures used to specify an authorization policy and an algorithm, often known as a policy decision point, to determine whether a request is authorized by a given policy.
As is customary in the literature we will use the terms \emph{subjects} and \emph{objects} when referring to the parties who are to, respectively, perform and be the target of authorization (inter)actions.

There have been numerous access control models defined since the topic first attracted interest in the mid-1960s.
The protection matrix model simply enumerated all authorized actions~\cite{Lampson_Protection}.
This is a conceptually simple approach, however, it is inefficient when dealing with more than a few subjects and objects.
New models have been introduced with the intention of addressing limitations in existing models or to accommodate richer types of authorization policy.

A prime example of this is the Role-Based Access Control (RBAC) model which allows permissions to perform actions to be granted to job roles.
Subjects are assigned to their applicable roles and thus gain the permissions to perform the necessary actions for that role.
The RBAC model offers several significant benefits over previous models.
In particular, it reduces the administrative burden of managing the access control system by abstracting policy assignment away from subjects to roles; additionally, it is conceptually simple, thereby being easily understood and implemented.
It is principally for these reasons that it (or some close variant) has become so widely utilised in modern computing systems.
Since RBAC's inception there have been numerous variations and extensions suggested to adapt it for specific applications.
These extensions have included, for example, support for role hierarchies, as well as geographical and temporal constraints~\cite{BertinoBF01,DamianiBCP07}.

More recently, alternative models have been growing in popularity, with Attribute-Based Access Control (ABAC)~\cite{Al-KahtaniS02} and Usage Control (UCON)~\cite{SandhuP03} receiving particular attention.
All these models assume that authorization should, essentially, be based on user attributes (particularly user identities).
However, in many computing systems, it is not the individual that is relevant to the access control decision, but the relationship that exists between the individual requesting access and the resource to which access is requested.
Consider, for example, a request by a user $u$ to read the records of a patient $p$.
The fact that $u$ is a doctor is a necessary, but not sufficient, condition for access to be granted.
Specifically, $u$ should be one of $p$'s doctors.
A second example arises when the same user may occupy different roles in different contexts.
A PhD student, for example, may be an enroled student on course $c_1$ and a teaching assistant for course $c_2$.
Clearly, a request to read the coursework of another student should be disallowed if the coursework is for course $c_1$ and allowed if for course $c_2$.
Whilst parameterized variants of RBAC are able to bundle the context into the role~\cite{GiuriI97}, this often leads to a proliferation of roles as each specific context must be `identified'.
As the number of roles tends towards the number of users this undermines RBAC's reduced administrative burden.
Access control languages based on first order logic or logic programming can express complex access control policies that can deal with such situations~\cite{BeckerFG10,GurevishN08}.
However, this comes at the cost of complexity, both for end users that have to specify policies and in terms of policy evaluation.

A new paradigm, known as relationship-based access control, has emerged, particularly to address access control in online social networks~\cite{CarminatiFP09,Fong11}.
In this paper, we extend relationship-based access control to arbitrary computing systems.
We provide a richer policy framework than RBAC, taking relationships into account, while retaining conceptual simplicity.
However, we also exploit features of RBAC and Unix to provide a scalable and intuitive policy language and evaluation strategy.
We introduce the concept of a \emph{path condition}, which is used to associate a request with a set of security principals at request time.
The security principals are authorized to perform particular actions.
Thus, at a high-level a security principal is analogous to a role.

The RPPM model takes its inspiration from the Unix access control model, RBAC and existing work on relationship-based access control.
However, it provides a much richer and more flexible basis for specifying access control policies than any of these models.
In particular, it provides arbitrary flexibility in the definition of principals, unlike Unix; it supports policy specification based on relationships, unlike RBAC; and it provides policy abstraction (based on principals) and support for general-purpose computing systems, unlike existing work on relationship-based access control (which has focused on social networks).
At the heart of the RPPM model is the system graph and the set of relationships.
In this paper, we describe how relationships are used to define principal-matching policies and how requests are interpreted in the context of the system graph and principal-matching policies.
We demonstrate how a policy decision point can be constructed, based on non-deterministic finite automata (NFA).
We also describe how the RPPM model can easily support more advanced access control concepts, which greatly increase the performance and flexibility of the model, thereby broadening its suitability for practical implementations.

In the next section, we formally define the core RPPM model and the authorization policy structures.
Section~\ref{sec:requests} describes how NFAs are used to evaluate requests within the model.
Section~\ref{sec:extended_typed_edges} details a set of extensions to the basic model, by introducing the notion of typed edges.
We provide a discussion of related work in Section~\ref{sec:related-work} and draw conclusions in Section~\ref{sec:conclusion}.

\section{The Model}\label{sec:model}
The central component of the RPPM model is the system graph, a labelled graph in which nodes represent the entities of the system being modelled and the labelled edges represent relationships between them.
As well as concrete entities, such as users and resources, nodes can also represent logical entities with which other entities are associated.
These logical entities can be employed to give context, or some system-specific grounding, to the concrete entities.
For example, in the case of a medical records management system we may have concrete entities representing patients, doctors, healthcare records and medicines; additionally we may have logical entities representing medical cases, healthcare teams and research projects.

A labelled edge linking two nodes identifies a relationship between these nodes.
Such edges may be directed (asymmetric) or undirected (symmetric) depending on the type of relationship.
Paths of edges in the system graph are used to match path conditions which identify principals to be associated with an authorization request.
It is these principals which are assigned permissions to perform actions on objects.

\subsection{System Model and System Graph}\label{sec:model:system_graph}
The RPPM model is designed as a general model for access control, utilising relationship information in order to make authorization decisions.
This generality comes from the model's ability to support whatever entity and relationship types are necessary to describe a particular system at the desired level of detail (unlike Unix, say).
Whilst such flexibility is powerful, it can also limit the checks and controls available for administration of an implementation of the model.
In order to provide an underlying structure and, therefore, a basis on which to incorporate the appropriate checks and controls, we first define a system model which constrains the ``shape'' of the system graph.

\begin{definition}
    A \emph{system model} comprises a set of types $T$, a set of relationship labels $R$, a set of \emph{symmetric} relationship labels $S \subseteq R$ and a \emph{permissible relationship graph} $G_{\textrm{PR}} = (V_{\textrm{PR}},E_{\textrm{PR}})$, where $V_{\textrm{PR}} = T$ and $E_{\textrm{PR}} \subseteq T \times T \times R$.
\end{definition}

\begin{definition}
    Given a system model $(T,R,S,G_{\textrm{PR}})$, a \emph{system instance} is defined by a \emph{system graph} $G = (V,E)$, where $V$ is the set of entities and $E \subseteq V \times V \times R$, and a function $\tau : V \rightarrow T$ which maps an entity to a type.
    We say $G$ is \emph{well-formed} if for each entity $v$ in $V$, $\tau(v) \in T$, and for every edge $(v,v',r) \in E$, $(\tau(v),\tau(v'),r) \in E_{\textrm{PR}}$.
\end{definition}

Our definition of a system graph allows for multiple edges between nodes, as multiple relationships frequently exist between entities in the real world; such a graph is sometimes called a \emph{multigraph}.
The administrative interface for any implementation of the RPPM model must ensure that the system graph is always well-formed with respect to its underlying system model.
As the system being modelled may well be dynamic, updates to the system graph must be controlled in order to continue to maintain its well-formedness.
Additionally, as will be seen in Section~\ref{sec:extended_typed_edges}, there is great utility in supplementing the system graph's relationship edges with ones derived during operation of the authorization system.
Such additions must transform a system graph from one well-formed state to another.
The extensions described in Section~\ref{sec:extended_typed_edges}, introduce specific ``system'' types of edges to the model.

\subsection{Path Conditions}\label{sec:model:path_condition}
In order to limit the administrative burden of defining access control policies in systems with many subjects, the RPPM model abstracts permission assignment to security principals (in the same way that roles simplify policy specification and maintenance in RBAC).
To determine if a particular principal is relevant to a request, an associated chain of relationships must be matched between the subject and object of the request.
Such chains of relationships are called path conditions.
They are composed of relationship labels, organised as sequences, with support for several regular expression-like operators.%
\footnote{The definition of path condition employs common regular expression operators, as do~\cite{ChengPS12dbsec} and~\cite{KhanF12}. We exclude disjunction and the Kleene star for reasons described in Section~\ref{sec:model:policies}.}

\begin{definition}\label{def:path-condition}
   Given a set of relationships $R$, we define a \emph{path condition} recursively:
    \begin{itemize}
        \item $\diamond$ is a path condition;
        \item $r$ is a path condition for all $r \in R$;
        \item if $\pi$ and $\pi'$ are path conditions, then $(\pi)$, $\pi \comp \pi'$, $\pi^+$ and $\overline{\pi}$ are path conditions.
    \end{itemize}
   A path condition of the form $r$ or $\overline{r}$, where $r \in R$, is said to be an \emph{edge condition}.
\end{definition}

Informally, $\pi \comp \pi'$ represents the concatenation of two path conditions; $\pi^+$ represents one or more occurrences, in sequence, of $\pi$; and $\overline{\pi}$ represents $\pi$ reversed; $\diamond$ defines an ``empty'' path condition.
$(\pi)$ provides a means of clearly indicating the extent of path condition $\pi$ such that use of the Kleene plus operator is unambiguous.
The satisfaction of a path condition is defined relative to a system graph $G$ and two nodes $u$ and $v$ in the graph.

\begin{definition}\label{def:path-condition-satisfaction}
    Given a system graph $G = (V,E)$ and $u,v \in V$, we write $G,u,v \models \pi$ to denote that  $G$, $u$ and $v$ \emph{satisfy path condition} $\pi$.
    Then, for all $G,u,v,\pi,\pi'$:
    \begin{itemize}
        \item $G,u,v \models \diamond$ iff $v = u$;
        \item $G,u,v \models r$ iff $(u,v,r) \in E$;
        \item $G,u,v \models (\pi)$ iff $G,u,v \models \pi$;
        \item $G,u,v \models \pi \comp \pi'$ iff there exists $w \in V$ such that $G,u,w \models \pi$ and $G,w,v \models \pi'$;
        \item $G,u,v \models \pi^+$ iff $G,u,v \models \pi$ or $G,u,v \models \pi \comp \pi^+$;
        \item $G,u,v \models \overline{\pi}$ iff $G,v,u \models \pi$.
    \end{itemize}
\end{definition}

The compositional nature of path conditions, along with the regular expression-like operators, means that there is flexibility in how chains of relationships can be specified in a path condition.
For example, the path conditions $\pi \comp \pi^+$ and $\pi^+ \comp \pi$ are valid representations of the same chain of relationships -- specifically, two or more instances of the relationship $\pi$.
We now define what we mean by the equivalence of two path conditions, enabling us to define the concept of \emph{simple} path conditions; henceforth, we will be assume all path conditions are simple.

\begin{definition}\label{def:path-condition-equivalence}
    Path conditions $\pi$ and $\pi'$ are said to be \emph{equivalent}, denoted $\pi \equiv \pi'$, if, for all system graphs $G = (V,E)$ and all $u,v \in V$ we have
    \[
        G,u,v \models \pi \quad\text{if and only if} \quad G,u,v \models \pi'.
    \]
\end{definition}

Trivially, by Definition~\ref{def:path-condition-satisfaction} and the definition of a symmetric relationship, we have
\begin{inparaenum}[(i)]
   \item $\overline{\diamond} \equiv \diamond$;
    \item $(\pi) \equiv \pi$ for all path conditions $\pi$;
   \item $\overline{s} \equiv s$ for all $s \in S$.
\end{inparaenum}
We also have the following results.

\begin{proposition}\label{pro:simple-equivalences}
    For all path conditions $\pi_1$ and $\pi_2$:
    \begin{enumerate}[\em (i)]
        \item $\pi_1 \comp \diamond \equiv \diamond \comp \pi_1 \equiv \pi_1$;
        \item $\overline{\pi_1^+} \equiv \overline{\pi_1}^+$;
        \item $\overline{\overline{\pi_1}} \equiv \pi_1$;
        \item $\overline{\pi_1 \comp \pi_2} \equiv \overline{\pi_2} \comp \overline{\pi_1}$;
        \item $(\pi^+)^+ \equiv \pi^+$;
        \item $\pi_1^+ \comp \pi_1 \equiv \pi_1 \comp \pi_1^+$.
    \end{enumerate}
\end{proposition}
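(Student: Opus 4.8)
The plan is to prove every item directly from Definition~\ref{def:path-condition-equivalence}: fix an arbitrary system graph $G = (V,E)$ and arbitrary $u,v \in V$, unfold both sides using the clauses of Definition~\ref{def:path-condition-satisfaction}, and check that the two resulting conditions on $G,u,v$ coincide. Two preliminary observations make this manageable. First, concatenation is associative up to equivalence: unfolding $G,u,v \models (\pi_1 \comp \pi_2) \comp \pi_3$ and $G,u,v \models \pi_1 \comp (\pi_2 \comp \pi_3)$ both yield ``there exist $w,w' \in V$ with $G,u,w' \models \pi_1$, $G,w',w \models \pi_2$ and $G,w,v \models \pi_3$'', since nested existentials commute; so we may write $k$-fold concatenations $\pi^{(k)} := \pi \comp \cdots \comp \pi$ ($k$ copies, $k \ge 1$) without parenthesisation. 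Second, and crucially, I would first establish the normal-form characterisation
\[
   G,u,v \models \pi^+ \quad\iff\quad \exists k \ge 1.\ G,u,v \models \pi^{(k)}.
\]
The ``if'' direction is an easy induction on $k$ using the clause $G,u,v \models \pi \comp \pi^+$; the ``only if'' direction requires reading the recursive clause for $\pi^+$ in Definition~\ref{def:path-condition-satisfaction} as a least fixed point and then observing that the relation $\set{(u,v) : \exists k \ge 1.\ G,u,v \models \pi^{(k)}}$ is itself a fixed point of that clause, hence contains (in fact equals) the relation denoted by $\models \pi^+$.

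With these in hand the items fall out. For (i), $G,u,v \models \pi_1 \comp \diamond$ iff there is $w$ with $G,u,w \models \pi_1$ and $G,w,v \models \diamond$, i.e.\ $w = v$, i.e.\ $G,u,v \models \pi_1$; the case $\diamond \comp \pi_1$ is symmetric. For (iii), $G,u,v \models \overline{\overline{\pi_1}}$ iff $G,v,u \models \overline{\pi_1}$ iff $G,u,v \models \pi_1$. For (iv), $G,u,v \models \overline{\pi_1 \comp \pi_2}$ iff $G,v,u \models \pi_1 \comp \pi_2$ iff there is $w$ with $G,v,w \models \pi_1$ and $G,w,u \models \pi_2$ iff there is $w$ with $G,u,w \models \overline{\pi_2}$ and $G,w,v \models \overline{\pi_1}$ iff $G,u,v \models \overline{\pi_2} \comp \overline{\pi_1}$. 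Item (ii) then follows by a short induction on $k$ from (iv) and associativity: $\overline{\pi_1^{(k)}} \equiv \overline{\pi_1}^{(k)}$ (the reversal of a power of a single path condition is the same power of its reverse), so $G,u,v \models \overline{\pi_1^+}$ iff $G,v,u \models \pi_1^+$ iff $\exists k \ge 1.\ G,v,u \models \pi_1^{(k)}$ iff $\exists k \ge 1.\ G,u,v \models \overline{\pi_1}^{(k)}$ iff $G,u,v \models \overline{\pi_1}^+$.

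Finally, (v) and (vi) are immediate from the normal form. For (vi), $G,u,v \models \pi_1^+ \comp \pi_1$ iff there are $w$ and $k \ge 1$ with $G,u,w \models \pi_1^{(k)}$ and $G,w,v \models \pi_1$, i.e.\ $G,u,v \models \pi_1^{(k+1)}$ for some $k \ge 1$, i.e.\ $G,u,v \models \pi_1^{(m)}$ for some $m \ge 2$; exactly the same description is obtained from $\pi_1 \comp \pi_1^+$, so the two are equivalent. For (v), $G,u,v \models (\pi^+)^+$ iff $G,u,v$ decomposes into $k \ge 1$ consecutive segments each satisfying $\pi^+$, each of which in turn decomposes into at least one segment satisfying $\pi$; concatenating these decompositions (using associativity) shows this holds iff $G,u,v$ decomposes into at least one $\pi$-segment, i.e.\ iff $G,u,v \models \pi^+$. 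The only real obstacle is the one already flagged: making the manipulation of $\pi^+$ rigorous, i.e.\ justifying the normal-form characterisation from the recursive clause. Once that lemma is in place, every item of the proposition is a one- or two-line unfolding.
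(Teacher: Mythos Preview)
Your proposal is correct and follows the same underlying strategy as the paper---fix $G,u,v$, unfold both sides via Definition~\ref{def:path-condition-satisfaction}, and compare---but you are considerably more careful than the paper itself. The paper's proof asserts that all six items follow immediately from Definitions~\ref{def:path-condition-satisfaction} and~\ref{def:path-condition-equivalence} and works out only (iv) explicitly, with exactly the chain of biconditionals you give. Your preliminary lemmas (associativity of $\comp$ up to equivalence, and the normal form $G,u,v \models \pi^+ \iff \exists k \ge 1.\ G,u,v \models \pi^{(k)}$) are not stated in the paper; it simply treats (ii), (v) and (vi) as equally ``immediate''. Your observation that the recursive clause for $\pi^+$ must be read as a least fixed point to justify the normal form is a genuine subtlety the paper leaves implicit. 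What your route buys is rigour for the items involving $\pi^+$; what the paper's route buys is brevity, at the cost of sweeping that fixed-point issue under the rug.
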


\begin{proof}
    All results follow immediately from Definitions~\ref{def:path-condition-satisfaction} and~\ref{def:path-condition-equivalence}.
    Consider (iv), for example.
    By definition, $G,u,v \models \overline{\pi_1 \comp \pi_2}$ if and only if $G,v,u \models \pi_1 \comp \pi_2$.
    And $G,v,u \models \pi_1 \comp \pi_2$ if and only there exists $w$ such that $G,v,w \models \pi_1$ and $G,w,u \models \pi_2$.
    Thus we have $G,u,v \models \overline{\pi_1 \comp \pi_2}$ if and only if there exists $w$ such that $G,w,v \models \overline{\pi_1}$ and $G,u,w \models \overline{\pi_2}$.
    That is $G,u,v \models \overline{\pi_2} \comp \overline{\pi_1}$.
\end{proof}

\begin{definition}\label{def:simple-path-condition}
   Given a set of relationships $R$, we define a \emph{simple path condition} recursively:
    \begin{itemize}
        \item $\diamond$, $r$ and $\overline{r}$, where $r \in R$, are simple path conditions;
        \item if $\pi \ne \diamond$ and $\pi' \ne \diamond$ are simple path conditions, then $(\pi)$, $\pi \comp \pi'$ and $\pi^+$ are simple path conditions.
    \end{itemize}
\end{definition}

In other words, $\overline{\star}$ occurs in a simple path condition if and only if $\star$ is an element of $R$.
It follows from Proposition~\ref{pro:simple-equivalences} that every path condition may be reduced to a simple path condition.
The path condition $\overline{\overline{r_1 \comp r_2} \comp (r_1 \comp r_3)^+}$, for example, can be transformed into the equivalent path condition $(\overline{r_3} \comp \overline{r_1})^+ \comp r_1 \comp r_2$ using the equivalences in Proposition~\ref{pro:simple-equivalences}.

\begin{remark}\label{rem:r-includes-rbar}
    Henceforth, we assume all path conditions are simple.
    Thus we may define the set of relationship labels to be $\RcoR = R \cup \overline{R}$, where $\overline{R}$ is defined to be $\set{\overline{r} : r \in R}$.
    Given this formulation, the system graph must satisfy the following consistency requirements:
    \begin{itemize}
     \item $(t,t',r) \in E_{\textrm{PR}}$ if and only if $(t',t,\overline{r}) \in E_{\textrm{PR}}$;
     \item $(v,v',r) \in E$ if and only if $(v',v,\overline{r}) \in E$;
     \item $(v,v',s) \in E$ if and only if $(v',v,s) \in E$.
    \end{itemize}
\end{remark}

\subsection{Policies}\label{sec:model:policies}

\newcommand{\all}{\mathsf{all}}
\newcommand{\none}{\mathsf{none}}

The RPPM model employs two policies: the principal-matching policy and the authorization policy.
The purpose of the principal-matching policy is to determine which principals are relevant to an access request.
The purpose of the authorization policy is to determine the actions for which a principal is authorized.

An authorization principal is, therefore, a central component in RPPM policies.
A request is mapped to a set of principals and each principal is authorized to perform particular actions.
In other words, principals are analogous to roles in a role-based access control (RBAC) model.
However, the way in which principals are associated with subjects is very different from RBAC.

A second important component is a \emph{target}.
Each principal-matching rule specifies two targets.
Every path condition $\pi$ is a target and we say a request $(s,o,a)$, where $s$ and $o$ are vertices in the system graph $G$ (and $a \in A$ is a requested action), \emph{matches} target $\pi$ if $G,s,o \models \pi$.
We define two special targets: $\all$ and $\none$, where $\all$ matches every request and $\none$ matches no request.
By a slight abuse of notation ($\all$ and $\none$ are not path conditions), we will write $G,s,o \models \all$ and $G,s,o \not\models \none$, for any request $(s,o,a)$.
Given a request $q = (s,o,a)$, where $s$ and $o$ are vertices in the system graph $G$, we will write $G,q \models \pi$, rather than $G,s,o \models \pi$, to simplify notation.

\begin{definition}
    Let $P$ be a set of authorization principals.
    A \emph{principal-matching rule} has the form $(\ppmc,\npmc,p)$, where $p \in P$ and $\ppmc$ and $\npmc$ are targets.
    A \emph{principal-matching policy} is a set of principal-matching rules.
\end{definition}

Informally, targets are used to determine which rules are applicable to a given request, where $\ppmc$ specifies a required path in the system graph and $\npmc$ specifies a forbidden path.
The meaning of a principal-matching policy (PMP) is defined in the context of a system graph and a request.

\begin{definition}\label{def:matched-principals}
    We say a principal-matching rule $(\ppmc,\npmc,p)$ is \emph{applicable} to a request $q = (s,o,a)$ if and only if $G,q \models \ppmc$ and $G,q \not\models \npmc$.
    Given a system graph $G = (V,E)$, a PMP $\rho$
    and a request $q = (s,o,a)$, where $s,o \in V$, we define the set of \emph{matched principals}:
    \[
    \semantics{\rho}^G_q \stackrel{\rm def}{=} \set{p \in P : (\ppmc,\npmc,p) \in \rho\ \text{is applicable to $q$}}.
    \]
\end{definition}

Henceforth, $G$ will be assumed to be given, so we will simply write $\semantics{\rho}_q$ to denote the set of matched principals for policy $\rho$ and request $q$.
We will further abbreviate this to $\semantics{\rho}$ when $q$ is obvious from context.

\begin{remark}
 Path conditions are clearly closely related to regular expressions.
 However, we do not include disjunction or the Kleene star operator in our definition of path conditions.
 Instead, we use two (or more) principal-matching rules.
 The path condition $\pi^* \comp \pi'$, for example, can be associated with a principal $p$ by specifying the principal-matching rules $(\pi',\none,p)$ and $(\pi^+ \comp \pi',\none,p)$.
 This approach is preferable to including these features as they're inclusion would not increase the expressiveness of the policy language but would introduce greater complexity into the request evaluation process discussed in Section~\ref{sec:requests:path-conditions-to-principal-matching}\footnote{Specifically we would require an additional NFA construction mechanism for each.}.
\end{remark}

\begin{remark}
 A PMP may specify a default principal $p_{\rm def}$, much like the concept of ``world'' in the Unix access control system.
 To do so, we include the principal-matching rule $(\all,\none,p_{\rm def})$.
\end{remark}

\begin{definition}\label{def:auth-decisions}
  An \emph{authorization rule} has the form $(p,x,y,b)$, where $p \in P$, $x \in O \cup T \cup \set{\star}$, $y \in A \cup \set{\star}$ and $b \in \set{0,1}$.\footnote{Recall $T$ is the set of entity types.}
  Given a PMP $\rho$, an authorization rule $(p,x,y,b)$ is applicable to a request $q = (s,o,a)$ if all of the following conditions hold:
  \begin{itemize}
   \item $p \in \semantics{\rho}$;
   \item $x \in \set{o,\tau(o),\star}$;
   \item $y \in \set{a,\star}$.
  \end{itemize}
    An \emph{authorization policy} is a set of authorization rules.
    Given an authorization policy $\varrho$
    and a request $q = (s,o,a)$, we define the set of \emph{authorization decisions}:
    \[
    \semantics{\rho,\varrho}^G_q \stackrel{\rm def}{=} \set{b \in \set{0,1} : (p,x,y,b) \in \varrho\ \text{is applicable to}\ q}.
    \]
\end{definition}

The rule $(p,o,a,1)$ indicates that principal $p$ is authorized to perform action $a$ on object $o$, while $(p,o,a,0)$ indicates $p$ is not authorized.
The wild card character $\star$ is used to simplify the specification of policies.
It can be used, for example, to authorize a principal for all actions on a given object.
We can then prohibit specific actions using a negative authorization tuple.
Thus, the inclusion of $(p,o,\star,1)$ and $(p,o,a,0)$ in the policy authorizes $p$ for all actions on object $o$, \emph{except} action $a$.

The ability to specify object authorization rules for individual objects $o$ or all objects $\star$ provides for flexible policy creation.
However, in some circumstances these two extremes may not be appropriate.
Support for authorization rules specified in terms of object types goes some way to balancing these two alternatives.
The rule $(p,t,a,1)$ indicates that principal $p$ is authorized to perform action $a$ on all objects of type $t$, whilst $(p,t,a,0)$ indicates $p$ is not authorized.

Again, we will simply write $\semantics{\rho,\varrho}_q$ to indicate the set of authorization decisions for policy $\varrho$ and further abbreviate this to $\semantics{\rho,\varrho}$ when no ambiguity will arise.

\begin{example}\label{ex:higher-education-1}
Returning to our higher education example from Section~\ref{sec:intro}, we can envisage a system that includes a PhD student $u_1$ and professor $u_2$, courses $c_1$ and $c_2$, coursework answers $a_1$ and $a_2$ for course $c_1$, and coursework answer $a_3$ for course $c_2$ (illustrated by the system graph fragment shown in Figure~\ref{img:he-system-graph-fragment}).

\begin{figure}[!ht]\centering\setlength{\extrarowheight}{2pt}
         \begin{tikzpicture}
            [node distance=1.5cm and 3cm, on grid,
            caching/.style={color=blue!70,>=open diamond}, 
            audit-a/.style={color=teal, densely dashed}, 
            audit-d/.style={color=red, densely dotted}, 
            interest-a/.style={color=purple,>=\interestahead}, 
            interest-b/.style={color=brown,>=\interestdhead}, 
            every circle node/.style={draw,minimum width=20pt},thick,
            every node/.append style={scale=0.7, transform shape}]
            \begin{scope}[>=latex] 
                \node[circle] (u1) {\entity{u_1}};
                \node[circle,left=of u1] (c1) {\entity{c_1}};
                \node[circle,right=of u1] (c2) {\entity{c_2}};
                \node[circle,left=of c1] (a1) {\entity{a_1}};
                \node[circle,right=of c2] (a3) {\entity{a_3}};
                \node[circle,below=of c1] (a2) {\entity{a_2}};
                \node[circle,above=of c1] (u2) {\entity{u_2}};
                \draw[thick,->] (u1) to node[swap,auto] {\textsf{is-enrolled-on}} (c1);
                \draw[thick,->] (u2) to node[auto] {\textsf{is-responsible-for}} (c1);
                \draw[thick,->] (u1) to node[auto] {\textsf{is-ta-for}} (c2);
                \draw[thick,->] (a1) to node[auto] {\textsf{is-coursework-for}} (c1);
                \draw[thick,->] (a3) to node[swap,auto] {\textsf{is-coursework-for}} (c2);
                \draw[thick,->] (a2) to node[auto] {\textsf{is-coursework-for}} (c1);
                \draw[thick,->] (u1) to node[auto] {\textsf{is-creator-of}} (a2);
              \end{scope}
      \end{tikzpicture}
      \caption{A fragment of the system graph for the higher education use case}\label{img:he-system-graph-fragment}
\end{figure}
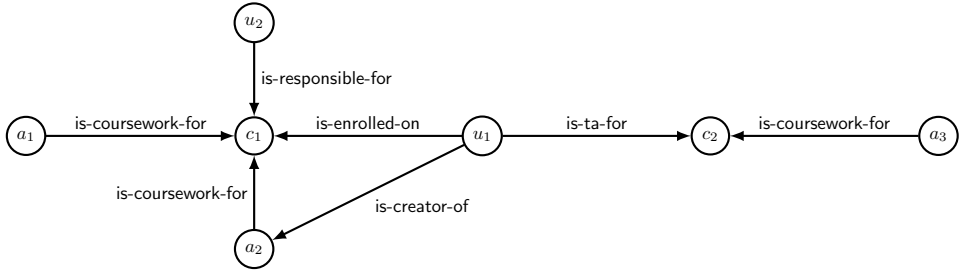

It is natural that we would wish to constrain the PhD student from accessing answers (other than their own) for courses on which they are enrolled as a student, whilst we would wish to grant access to those for courses for which they are a teaching assistant.
To do so requires the ability to distinguish the context (in this case the course) associated with a request.
We can achieve this in the RPPM model through the specification of the following policies:
  \begin{align*}
   \rho = \{&(\textsf{is-creator-for},\none,\princ{author}), \\
        & (\textsf{is-ta-for} \comp \overline{\textsf{is-coursework-for}}, \textsf{is-enrolled-on} \comp \overline{\textsf{is-coursework-for}},\princ{course-ta}), \\
        & (\textsf{is-responsible-for}  \comp \overline{\textsf{is-coursework-for}},\none,\princ{course-leader})\} \\
   \varrho = \{&(\princ{author},\star,\act{read},1),
         (\princ{author},\star,\act{write},1),
         (\princ{course-ta},\star,\act{read},1),
         (\princ{course-ta},\star,\act{grade},1), \\
        & (\princ{course-leader},\star,\act{read},1),
        (\princ{course-leader},\star,\act{review},1) \}
  \end{align*}

We will consider request evaluation in more detail in Section~\ref{sec:requests}.
However, the intuition behind request evaluation is that we determine whether, for a given request and system graph, there is a path in the graph from subject to object for which the associated labels match the path condition.
Consider the request $(u_1, a_1, \act{read})$, for example.
There is no path in the graph between $u_1$ and $a_1$ that matches any of the mandated targets in rules within $\rho$.
Thus, the set of matched principals is empty (which will lead to the request being denied, assuming a deny-by-default discipline).
On the other hand, the set of matched principals for request $(u_1, a_2, \act{read})$ is $\set{\princ{author}}$, since there is a path from $u_1$ to $a_2$ with label \textsf{is-creator-of}; hence the request will be granted (because of the first rule in $\varrho$).
However, the set of matched principals for request $(u_1,a_3,\act{read})$ is $\set{\princ{course-ta}}$ and the request will be permitted (because of the third rule in $\varrho$).
Note the difference in outcomes for requests $(u_1,a_1,\act{read})$ and $(u_1,a_3,\act{read})$ because of the different relationships that exist between $u_1$ and the courses associated with objects $a_1$ and $a_3$.
Similarly, the set of matched principals for $(u_2,a_1,\act{read})$ and $(u_2,a_2,\act{read})$ is $\set{\princ{course-leader}}$ and these requests will be granted, whereas the set of matched principals for $(u_2,a_3,\act{read})$ is empty (and the request will be denied).
Again, the professor's relationship with the two courses determines the  principals (and thus decisions) associated with the respective requests.
\end{example}

\subsection{Policy Extensions}\label{sec:model:policy-extensions}

We now describe additional refinements of principal-matching policies and authorization policies.
Most importantly, we indicate how we deal with a set of decisions that is not a singleton and how we can provide support for principal activation rules.

\subsubsection{Conflict Resolution}\label{sec:model:policy-extensions:conflict-resolution}
The authorization rule $(p,o,a,0)$ explicitly disallows $p$ from performing action $a$ on object $o$, while $(p,o,a,1)$ explicitly allows it.
Thus, the set of applicable authorization decisions may contain conflicting decisions.

Accordingly, we define an \emph{extended} authorization policy to be a pair $(\varrho,\chi)$, where $\varrho$ is a set of authorization rules and $\chi$ is a \emph{conflict resolution strategy} (CRS) which is used to reduce the set of matching decisions to a single decision.
That is, \mbox{$\semantics{\rho,(\varrho,\chi)} \in \set{\set{0},\set{1}}$}.
In the interests of brevity, we will continue to write $\semantics{\rho,\varrho}$ in preference to $\semantics{\rho,(\varrho,\chi)}$.

The \crso{DenyOverrides} CRS reduces the set to a single deny ($0$) decision if there is at least one $0$ in the set of matching authorization decisions.
Conversely, \crso{AllowOverrides} reduces the set to a single allow decision if there is at least one $1$ in the set.

\subsubsection{List-oriented Policies}\label{sec:model:policy-extensions:list-oriented-policies}
The meanings of a principal-matching policy and an authorization policy are defined in terms of sets.
A number of access control systems are list-oriented, in the sense that the first applicable decision is used, the Unix access control mechanism being one example.
We could equally well require that rules in PMPs and authorization policies are evaluated in a particular order.
(That is PMPs and authorization policies should be lists, rather than sets, of rules.)

In this case, it would make sense to introduce list-oriented processing.
In particular, we might insist that we take the first matched principal, so the meaning of a PMP becomes a single principal.
Similarly, we might insist that we take the first matching authorization decision (so we would not require conflict resolution).

If a list-oriented approach is employed, then the default principal-matching rule $(\all,\none,p_{\rm def})$ must be placed at the end of the list of principal-matching rules.

\subsubsection{Graph- and Tree-based Policies}\label{sec:model:policy-extensions:tree-based-policies}

\newcommand{\sbpmp}{\ensuremath{_{\rm PMP}}}

Principal-matching rules make use of two targets (one mandated and one precluded).
These targets are evaluated when deciding whether a request is to be permitted or not.
As we have seen, we can support disjunction, where a principal is activated if at least one of several path conditions is satisfied, through the use of multiple principal-matching rules for the same principal.
There may also be times when there is a need to match a security principal only if each one of several path conditions is satisfied.
The basic RPPM policy model described above does not support this requirement.

Hence, we introduce the idea of a \emph{policy graph}.
We arrange the rules in a PMP as a directed acyclic graph, making the process of matching principals more like the evaluation of XACML policies.
More formally, a policy graph is a directed acyclic PMP graph $G\sbpmp = (V\sbpmp,E\sbpmp)$.
The PMP graph is required to have a unique node of in-degree $0$, which we will call the \emph{root}.
Each vertex in the PMP graph is a PMP rule.
It is convenient, in this setting, to define a {\sf null} principal; the {\sf null} principal must not appear in any authorization rules.

Evaluation of the PMP graph is performed by a breadth-first search.
A vertex $v$ (that is, a PMP rule) is only evaluated if the request is applicable to each PMP rule on every path from the root to $v$.
(Of course, if we insist that the PMP graph is a tree, there is only one such path.)
As before, the set of matched principals is simply the set of principals associated with applicable rules.

It is easy to see that our list-oriented policies can be represented in this way.%
 \footnote{We define a tree with root node $(\all,\none,{\sf null})$ and each PMP rule is a child of the root node.}
However, this graph-based approach also makes it possible to encode \emph{principal activation rules} of the form ``if $p$ is applicable to a given request then so is principal $p'$''.
Moreover, we can insist that a principal $p$ is only activated if multiple path conditions $\pi_1,\dots,\pi_n$ are satisfied.

Consider the simple policy graph in Figure~\ref{fig:graph-based-pmp}.
Then (ignoring the {\sf null} principal which has no authorizations) the set of matched principals may be one of $\emptyset$, $\set{p_1}$, $\set{p_2,p_4}$, or $\set{p_1,p_2,p_3,p_4}$.
In particular, $p_4$ is activated if $p_2$ is, because the path conditions associated with $p_2$ are satisfied (and the targets associated with $p_4$ are trivially satisfied); and if both $p_1$ and $p_2$ are activated (because their respective path conditions are satisfied) then so is $p_3$ (as well as $p_4$).

\begin{figure}[!ht]\centering
    \begin{tikzpicture}[->, node distance=1.5cm and 2.5cm, on grid]
        \node (a) {$(\all,\none,{\sf null})$};
        \node[below left=of a] (b) {$(\phi_1,\psi_1,p_1)$};
        \node[below right=of a] (c) {$(\phi_2,\psi_2,p_2)$};
        \node[below left=of c] (d) {$(\all,\none,p_3)$};
        \node[below right=of c] (e) {$(\all,\none,p_4)$};
        \draw (a) -- (b);
        \draw (b) -- (d);
        \draw (a) -- (c);
        \draw (c) -- (d);
        \draw (c) -- (e);
    \end{tikzpicture}
    \caption{A graph-based PMP}\label{fig:graph-based-pmp}
\end{figure}
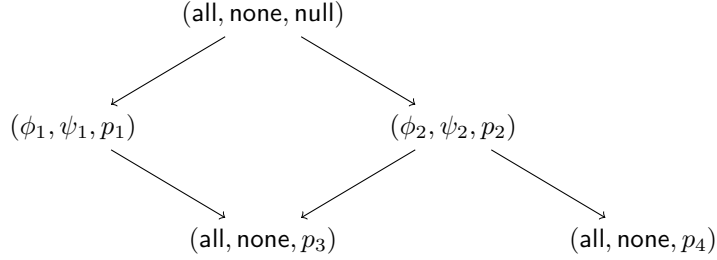

\subsection{Default Decisions}\label{sec:model:defaults}
A default access control decision (allow or deny) needs to be specified in the event that no authorization rules apply to a request.
Systems may need to support \defo{allow-by-default} when the system enters an emergency state, such as the opening of fire exit doors when there is a fire.
Other circumstances will commonly require fail-safe handling, where a \defo{deny-by-default} strategy is implemented in order to ensure no unauthorised access is allowed. Some systems may be deemed so sensitive that there may be no conditions under which \defo{allow-by-default} would be enabled.

There are two circumstances in the RPPM model when default decision-making applies.
The first is when no matched principals are identified (\mbox{$\semantics{\rho} = \emptyset$}), whilst the second is when there are no explicit authorisations (\mbox{$\semantics{\rho,\varrho} = \emptyset$}).
Accordingly, we allow for default decisions to be applied at one of the following levels: default-per-subject, default-per-object, default-per-type or system-wide default.
The default-per-subject decision is only applied when there are no matched principals.%
\footnote{It is not applied when there are no explicit authorizations: when the set of possible decisions is determined, the subject is no longer  relevant, having been used to identify the appropriate matched principals.}

The four defaults are evaluated in order, where specified, with the first applicable default determining the authorization decision.
In this way, if there is a default specified for the subject $s$ of the request $(s,o,a)$, the subject's default (allow or deny) applies.
If no subject default is defined for $s$, then the default for the object $o$ of the request shall apply, if specified.
If there is no subject default for $s$ and no object default for $o$, then the default for the type of object $\tau(o)$ shall apply.
If none of these defaults are defined, then the system-wide default shall apply.
Defaults for the subject, object and type are optional and need not be specified for the entities involved in the request.
However, a system-wide default must be specified in order to ensure an authorization decision can be made for every request.

\section{Request Evaluation}\label{sec:requests}
Request evaluation uses a two-step process, as shown in Figure~\ref{img:overview}, where first we \emph{compute principals} and subsequently \emph{compute authorizations}.
This two-step request evaluation process is inspired by Unix, which first determines the relevant principal (from ``owner'', ``group'' and ``world'') and then authorizations (from the permission mask of the object).

\begin{figure}[!ht]\centering
    \begin{tikzpicture}
            [node distance=3cm and 1cm,
            every rectangle node/.style={draw,minimum width=30pt, minimum height=30pt, align=center,inner sep=6pt},thick,
            every node/.append style={scale=0.7, transform shape}]
            \begin{scope}[>=latex] 
                \node[rectangle] (a) {Request};
                \node[rectangle,right=of a] (b) {Compute\\Principals};
                \node[rectangle,right=of b] (c) {Compute\\Authorizations};
                \node[rectangle,right=of c] (d) {Decision};
                \draw[->,thick] (a) to (b);
                \draw[->,thick] (b) to (c);
                \draw[->,thick] (c) to (d);
            \end{scope}
    \end{tikzpicture}
    \caption{Processing overview}\label{img:overview}
\end{figure}
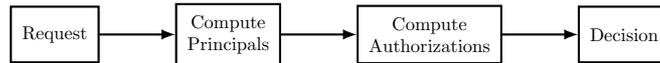

Figure~\ref{img:architecture} provides a detailed architecture of the complete request evaluation process, indicating the inputs necessary and decisions employed; the figure includes the conflict resolution extension described in Section~\ref{sec:model:policy-extensions:conflict-resolution}.
The first step of request evaluation, compute principals, is rather complex and, conceptually, requires the identification of paths within the system graph in order to determine the principals which match a request.
However, the second step, compute authorizations, involves simple lookups to determine whether the matched principals for a request are authorized to perform the requested action on the object.

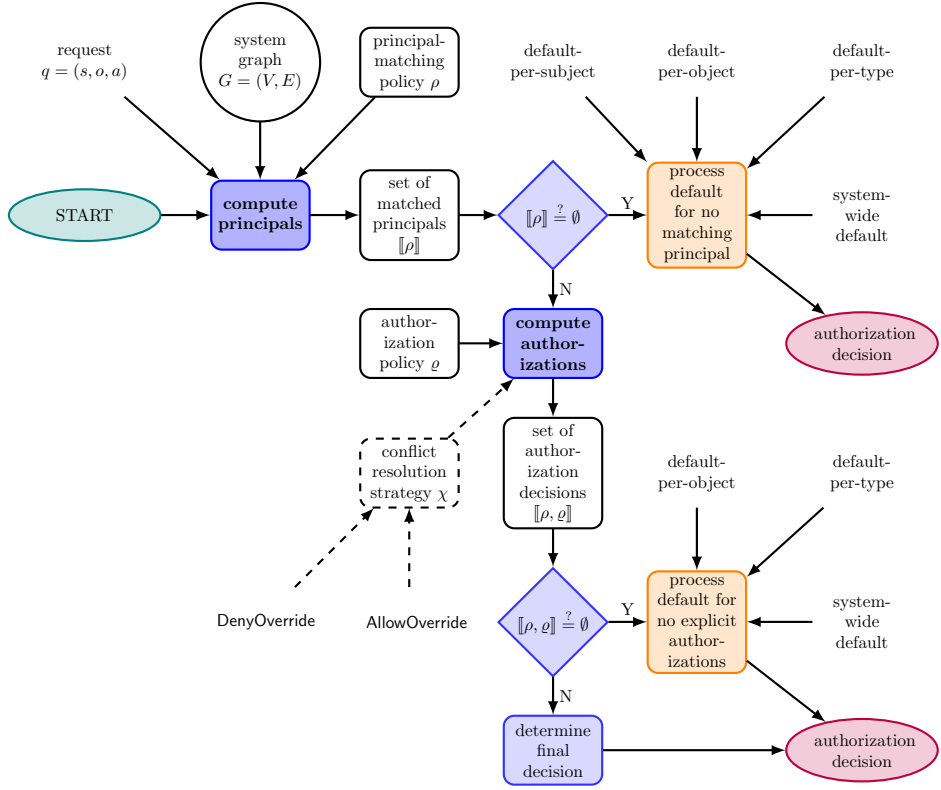
\begin{figure*}[!ht]\centering
  \begin{tikzpicture}
        [auto,
        input/.style={rectangle, text width=5em, align=center, minimum height=4em}, 
        decision/.style={diamond, draw=blue!80, thick, fill=blue!15, text width=4.5em, align=flush center, inner sep=1pt}, 
        block/.style ={rectangle, draw=blue!80, thick, fill=blue!15, text width=5em, align=center, rounded corners, minimum height=4em}, 
        default/.style ={rectangle, draw=orange, thick, fill=orange!20, text width=5em, align=center, rounded corners, minimum height=4em}, 
        coreblock/.style ={rectangle, draw=blue, thick, fill=blue!30, text width=5em, align=center, rounded corners, minimum height=4em}, 
        start/.style ={ellipse, draw=teal, thick, fill=teal!20, text width=5.5em, align=center, minimum height=3em}, 
        end/.style ={ellipse, draw=purple, thick, fill=purple!20, text width=5.5em, align=center, minimum height=3em}, 
        graph/.style ={circle, draw, thick, text width=5em, align=center, minimum height=4em}, 
        list/.style ={rectangle, draw, thick, text width=5em, align=center, minimum height=4em}, 
        set/.style ={rectangle, draw, thick, text width=5em, align=center, rounded corners, minimum height=4em}, 
        strategy/.style ={rectangle, draw, thick, text width=5em, align=center, rounded corners, minimum height=4em, dashed}, 
        line/.style ={draw, thick, -latex}, 
        option/.style ={dashed}, 
        every node/.append style={scale=0.65, transform shape}] 
        \matrix [column sep=5mm,row sep=5mm]
        {
            \node [input] (a1) {request $q = (s,o,a)$}; &
            \node [graph] (a2) {system graph $G = (V,E)$}; &
            \node [set] (a3) {principal-matching policy $\pmp$}; &
            \node [input] (a4) {default-per-subject}; &
            \node [input] (a5) {default-per-object}; &
            \node [input] (a6) {default-per-type};
            \\
            \node [start] (b1) {START}; &
            \node [coreblock] (b2) {\bf compute principals}; &
            \node [set] (b3) {set of matched principals $\semantics{\pmp}$}; &
            \node [decision] (b4) {$\semantics{\pmp} \stackrel{?}{=} \emptyset$}; &
            \node [default] (b5) {process default for no matching principal}; &
            \node [input] (b6) {system-wide default};
            \\
            &
            &
            \node [set] (c3) {author\-ization policy $\varrho$}; &
            \node [coreblock] (c4) {\bf compute author\-izations}; &
            &
            \node [end] (c6) {authorization decision};
            \\
            &
            &
            \node [strategy] (f3) {conflict resolution strategy $\crs$}; &
            \node [set] (d4) {set of author\-ization decisions $\semantics{\pmp, \varrho}$}; &
            \node [input] (d5) {default-per-object}; &
            \node [input] (d6) {default-per-type};
            \\
            &
            \node [input] (e2) {\crso{DenyOverride}}; &
            \node [input] (g2) {\crso{AllowOverride}}; &
            \node [decision] (e4) {$\semantics{\pmp, \varrho} \stackrel{?}{=} \emptyset$}; &
            \node [default] (e5) {process default for no explicit author\-izations}; &
            \node [input] (e6) {system-wide default};
            \\
             &
            &
            &
            \node [block] (f4) {determine final decision}; &
            &
            \node [end] (f6) {authorization decision};
            \\
        };
        \begin{scope}[every path/.style=line]
            \path (a1) -- (b2);
            \path (a2) -- (b2);
            \path (a3) -- (b2);
            \path (a4) -- (b5);
            \path (a5) -- (b5);
            \path (a6) -- (b5);
            \path (b1) -- (b2);
            \path (b2) -- (b3);
            \path (b3) -- (b4);
            \path (b4) -- node {Y} (b5);
            \path (b4) -- node {N} (c4);
            \path (b6) -- (b5);
            \path (b5) -- (c6);
            \path (c3) -- (c4);
            \path (c4) -- (d4);
            \path (d4) -- (e4);
            \path (d5) -- (e5);
            \path (d6) -- (e5);
            \path [option] (e2) -- (f3);
            \path (e4) -- node {Y} (e5);
            \path (e4) -- node {N} (f4);
            \path (e6) -- (e5);
            \path (e5) -- (f6);
            \path [option] (f3) -- (c4);
            \path (f4) -- (f6);
            \path [option] (g2) -- (f3);
        \end{scope}
    \end{tikzpicture}
    \caption{Detailed architecture}\label{img:architecture}
\end{figure*}

Pseudo-code for the entire request evaluation process is shown in Algorithm~\ref{alg:request-eval}.
The \textsf{ApplyDefaults} functionality can be inferred directly from the default decision handling discussion in Section~\ref{sec:model:defaults}, so no formal algorithm is required or provided here.

\begin{algorithm}[t]
\footnotesize
    \caption{\textsf{RequestEvaluation}}
    \label{alg:request-eval}
    \algsetup{indent=2em}
    \begin{algorithmic}[1]
        \REQUIRE{System graph $G = (V,E)$, set of relationship labels $\RcoR$, request $(s,o,a)$, principal-matching policy $\rho$ and extended authorization policy $(\varrho,\chi)$}
        \ENSURE{Returns authorization decision}
            \STATE{$\semantics{\rho} = $ \textsf{ComputePrincipals($G, \RcoR, (s,o,a), \rho$)}}
            \IF{$\semantics{\rho} = \emptyset$}
                \RETURN{\textsf{ApplyDefaults($s,o$)}}
            \ELSE
                \STATE{$\semantics{\rho,\varrho} = $ \textsf{ComputeAuthorizations($(s,o,a), (\varrho,\chi), \semantics{\rho}$)}}
                \IF{$\semantics{\rho,\varrho} = \emptyset$}
                    \RETURN{\textsf{ApplyDefaults($o$)}}
                \ELSIF{$\semantics{\rho,\varrho} = \set{0}$}
                    \RETURN false \COMMENT{deny}
                \ELSIF{$\semantics{\rho,\varrho} = \set{1}$}
                    \RETURN true \COMMENT{allow}
                \ENDIF
            \ENDIF
    \end{algorithmic}
\end{algorithm}

We determine the set of matched principals using Algorithm~\ref{alg:compute-principals} (\textsf{ComputePrincipals}).
The required processing is described in more detail in the next section.
Briefly, we exploit the correspondence between path conditions and regular expressions to build a non-deterministic finite automaton $M_{\pi}$ for path condition $\pi$; and we use the correspondence between labelled graphs and non-deterministic finite automata to construct a non-deterministic finite automaton $M_{G,q}$ derived from the system graph $G$ and information in a request $q$.%
\footnote{In a preliminary version of this paper we made use of a (modified) breadth-first search algorithm to determine whether a target (from within the principal-matching rule) was matched within the system graph~\cite{CramptonS14}.  In this paper, we make use of the rich theory underpinning regular expressions and finite automata to provide an alternative algorithm based on rigorous foundations.}
For brevity, we will write $M_q$ for $M_{G,q}$, as $G$ will always be obvious from context.
We use these non-deterministic finite automata to determine whether each principal-matching rule is applicable to a request, and whether its principal is, therefore, matched.

\begin{algorithm}[t]
\footnotesize
    \caption{\textsf{ComputePrincipals}}
    \label{alg:compute-principals}
    \algsetup{indent=2em}
    \begin{algorithmic}[1]
        \REQUIRE{System graph $G = (V,E)$, set of relationship labels $\RcoR$, request $(s,o,a)$ and principal-matching policy $\rho$}
        \ENSURE{Returns set of matched principals $\semantics{\rho}$}
            \STATE{$\semantics{\rho} = \emptyset$}
            \STATE{$M_{q} = (V, \RcoR, E, s, \set{o})$}
            \FOR{$(\ppmc,\npmc,p) \in \rho$} 
                        \IF{$(\ppmc = \all$) \OR $(L(M_\ppmc) \cap L(M_q) \neq \emptyset)$}
                            \IF{($\npmc = \none$) \OR $(L(M_\npmc) \cap L(M_q) = \emptyset)$}
                                \STATE{$\semantics{\rho} = \semantics{\rho} \cup p$}
                            \ENDIF
                        \ENDIF
            \ENDFOR
    \end{algorithmic}
\end{algorithm}

Having identified the set of matched principals, the set of authorizations can be easily determined from the applicable authorization rules (as per Definition~\ref{def:auth-decisions}) using Algorithm~\ref{alg:compute-authorizations} (\textsf{ComputeAuthorizations}).
This process is far simpler than that of Algorithm~\ref{alg:compute-principals}, being limited to simple comparisons and set membership checks.
Lines 7 through 11 of Algorithm~\ref{alg:compute-authorizations} provide for conflict resolution as described in Section~\ref{sec:model:policy-extensions:conflict-resolution}.

\begin{algorithm}[t]
\footnotesize
    \caption{\textsf{ComputeAuthorizations}}
    \label{alg:compute-authorizations}
    \algsetup{indent=2em}
    \begin{algorithmic}[1]
        \REQUIRE{Request $(s,o,a)$, extended authorization policy $(\varrho,\chi)$ and set of matched principals $\semantics{\rho}$}
        \ENSURE{Returns set of authorization decisions $\semantics{\rho,\varrho}$}
            \STATE{$\semantics{\rho,\varrho} = \emptyset$}
            \FOR{$(p,x,y,b) \in \varrho$} 
                \IF{$(p \in \semantics{\rho})$ \AND $((x = o) \OR (x = \tau(o)) \OR (x = \star))$ \AND $((y = a) \OR (y = \star))$}
                    \STATE{$\semantics{\rho,\varrho} = \semantics{\rho,\varrho} \cup b$}
                \ENDIF
            \ENDFOR
            \IF{($\chi = \crso{DenyOverrides})$ \AND $(0 \in \semantics{\rho,\varrho})$}
                \STATE{$\semantics{\rho,\varrho} = \set{0}$}
            \ELSIF{$(\chi = \crso{AllowOverrides})$ \AND $(1 \in \semantics{\rho,\varrho})$}
                \STATE{$\semantics{\rho,\varrho} = \set{1}$}
            \ENDIF
    \end{algorithmic}
\end{algorithm}

Recall that we may define more complex graph-based policies, in which the principal-matching rules are arranged in an acyclic directed graph.
Accordingly, we would need to modify the request evaluation algorithm to ensure that we identify all matched principals.
Informally, this means the \textsf{ComputePrincipals} algorithm becomes a breadth-first search of the policy graph which calls a sub-routine for testing path conditions at each node visited.

\subsection{From Path Conditions to NFAs to Principal Matching}\label{sec:requests:path-conditions-to-principal-matching}

We now describe the correspondence between path conditions and non-deterministic finite automata in more detail.
We also explain how to define the automaton $M_q$ given a system graph $G$ and a request $q$.
Finally, we explain how to construct an automaton that will determine whether a request $q$ matches a path condition $\pi$ (in the context of a system graph $G$).

A \emph{non-deterministic finite automaton} is a 5-tuple $M = (Q, \Sigma, \delta, s, F)$ where:
\begin{itemize}
 \item $Q$ is the set of \emph{states},
 \item $\Sigma$ is the set of inputs (the \emph{alphabet}),
 \item $\delta \subseteq Q \times Q \times \Sigma$ is the \emph{transition relation},
 \item $s \in Q$ is a \emph{start state} and $F \subseteq Q$ is the set of \emph{accepting states}.
\end{itemize}
Let $\omega = \sigma_1 \dots \sigma_\ell$, where $\sigma_i \in \Sigma$, be a \emph{word} over the alphabet $\Sigma$.
The NFA $M$ \emph{accepts} word $\omega$ if there exists a sequence of states, $q_0,\dots,q_\ell$ such that:
\begin{itemize}
 \item $s = q_0$ and $q_i \in Q$ for $i > 0$;
 \item $(q_i, q_{i+1},\sigma_{i+1}) \in \delta$ for $0 \leq i \leq \ell-1$;
 \item $q_\ell \in F$.
\end{itemize}
We write $L(M)$ to denote the set of words (or \emph{language}) accepted by $M$.

Given two NFAs, \mbox{$M_1 = (Q_1, \Sigma_1, \delta_1, s_1, F_1)$} and \mbox{$M_2 = (Q_2, \Sigma_2, \delta_2, s_2, F_2)$}, the \emph{intersection NFA} \mbox{$M_\cap = (Q_1 \times Q_2, \Sigma_1 \cap \Sigma_2, \delta_\cap, (s_1, s_2), F_1 \times F_2)$} accepts the language $L(M_1) \cap L(M_2)$, where
\[
\delta_\cap = \set{((q_1,q_2),(q_1',q_2'),\sigma) : (q_1,q_1',\sigma) \in \delta_1, (q_2,q_2',\sigma) \in \delta_2}.
\]

\subsubsection{Path Conditions as NFAs}

We now explain how to construct an NFA for a path condition.
The construction is straightforward and is based on standard techniques (see, for example,~\cite{Aho_Compilers}), given the obvious similarities between path conditions and regular expressions.

\begin{proposition}
Let $r \in \RcoR$ and let $\pi$ and $\phi$ be path conditions with NFAs $M_{\pi} = (Q_\pi,\Sigma_\pi,\delta_\pi,s_\pi,\set{f_\pi})$ and $M_{\phi}  = (Q_\phi,\Sigma_\phi,\delta_\phi,s_\phi,\set{f_\phi})$ accepting languages $L(M_\pi)$ and $L(M_\phi)$, respectively. Then 
\begin{itemize}
 \item \mbox{$M_r = (\set{s,f}, \set{r}, \set{(s,f,r)}, s, \set{f})$};
 \item $M_{\pi \comp \phi} = (Q_{\pi \comp \phi},\Sigma_\pi \cup \Sigma_\phi,\delta_{\pi \comp \phi},s_\pi,\set{f_\phi})$, where $Q_{\pi \comp \phi} = Q_\pi \cup Q_\phi \setminus \set{s_\phi}$ and \[ \delta_{\pi \comp \phi} = \delta_\pi \cup \delta_\phi \cup \set{(f_\pi,q,r) : (s_\phi,q,r) \in \delta_\phi} \setminus \set{(x,y,z) \in \delta_\phi : x = s_\phi}; \]
 \item \mbox{$M_{\pi^+} = (Q_\pi,\Sigma_\pi \cup \set{\epsilon},\delta_{\pi^+},s_\pi,\set{f_\pi})$}, where $\epsilon$ is the empty symbol and \[ \delta_{\pi^+} = \delta_\pi \cup \set{(f_\pi,s_\pi,\epsilon)}. \]
\end{itemize}
\end{proposition}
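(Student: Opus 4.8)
The plan is to prove each of the three NFA constructions separately, in each case showing that the stated automaton accepts exactly the language that "ought to" correspond to the path condition — that is, the set of words over $\RcoR$ that record, in order, the edge-condition labels encountered along a path satisfying the path condition. Since the proposition is really a construction lemma (it asserts that the displayed automata have single accepting states and accept the appropriate languages), the natural approach is structural induction on the path condition, using Definition~\ref{def:path-condition-satisfaction} together with standard arguments about concatenation and Kleene plus of regular languages.

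For the base case $M_r$, I would simply observe that the only accepting run reads the single symbol $r$, so $L(M_r) = \set{r}$, and that this is a one-state-accepting automaton as required; the correspondence with $G,u,v \models r$ iff $(u,v,r) \in E$ is immediate. For the concatenation case, I would argue that $L(M_{\pi \comp \phi}) = L(M_\pi) \cdot L(M_\phi)$: the construction deletes $s_\phi$ and redirects every transition that left $s_\phi$ to instead leave $f_\pi$, which is the textbook way of gluing the unique accepting state of $M_\pi$ onto the start state of $M_\phi$. The two directions — that any accepting run of $M_{\pi \comp \phi}$ splits at $f_\pi$ into a run of $M_\pi$ followed by a run of $M_\phi$, and conversely — are routine, and the induction hypothesis then gives the match with the ``there exists $w$'' clause of Definition~\ref{def:path-condition-satisfaction}. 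For the Kleene-plus case, adding the $\epsilon$-transition $(f_\pi, s_\pi, \epsilon)$ turns a single traversal of $M_\pi$ into any positive number of traversals in sequence, so $L(M_{\pi^+}) = L(M_\pi)^+$; again both inclusions are a standard decomposition of accepting runs at visits to $f_\pi$, and this matches the recursive clause $G,u,v \models \pi^+$ iff $G,u,v \models \pi$ or $G,u,v \models \pi \comp \pi^+$.

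The one point requiring genuine care — and the step I expect to be the main obstacle — is the bookkeeping in the concatenation case: one must check that the state-set surgery $Q_\pi \cup Q_\phi \setminus \set{s_\phi}$ together with the transition rewrite $\delta_\pi \cup \delta_\phi \cup \set{(f_\pi,q,r) : (s_\phi,q,r) \in \delta_\phi} \setminus \set{(x,y,z) \in \delta_\phi : x = s_\phi}$ neither creates spurious accepting runs nor destroys legitimate ones. In particular one needs that $s_\phi$ is genuinely removed (no transition into it survives, which uses that $M_\phi$'s constructions never produce an edge back into its start state — something that can be verified by a secondary induction over the same three constructions, or simply assumed as an invariant maintained by the construction), and that the states of $M_\pi$ and $M_\phi$ can be taken disjoint without loss of generality. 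A subtlety worth flagging is the treatment of $\epsilon$: since the plus construction introduces $\epsilon$, the alphabet of a compound automaton may contain $\epsilon$, and in the accepting-run definition an $\epsilon$-labelled transition is consumed without advancing the word; I would note that this does not disturb the language identities, since $\epsilon$ behaves as the identity under concatenation of words.

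Finally, I would remark that the proposition as stated omits the $\diamond$ and $\overline{r}$ cases at the leaf level, but since we are working with simple path conditions (Remark~\ref{rem:r-includes-rbar}), $\overline{r}$ is just an ordinary symbol of $\RcoR$ and is covered by the $M_r$ clause, while $\diamond$ and the remaining operators $\overline{\pi}$ and $(\pi)$ are handled by the simple equivalences of Proposition~\ref{pro:simple-equivalences} that push bars down to edge conditions and eliminate parentheses and $\diamond$ from non-trivial positions — so the three cases above suffice to build an NFA for every simple path condition. This also explains why the alphabet of $M_\pi$ is naturally a subset of $\RcoR \cup \set{\epsilon}$ and why each $M_\pi$ has a unique accepting state, an invariant that the induction preserves and that is needed for the concatenation construction to make sense.
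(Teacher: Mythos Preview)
Your proposal is sound and considerably more thorough than what the paper offers: the paper does not prove this proposition at all. It presents the three constructions as standard, remarks that ``the construction is straightforward and is based on standard techniques'' (citing a compilers textbook), and simply illustrates them with diagrams. Your structural-induction argument, tying each construction back to the corresponding clause of Definition~\ref{def:path-condition-satisfaction}, is exactly the verification one would supply if asked to justify the claim.

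One point deserves sharpening. You flag, correctly, that the concatenation surgery needs the invariant that $M_\phi$ has no transition \emph{into} its start state, and you suggest this ``can be verified by a secondary induction over the same three constructions.'' In fact that secondary induction fails for the $\pi^+$ case as stated in the proposition: the transition $(f_\pi, s_\pi, \epsilon)$ is precisely an edge into the start state. So if $\phi = \psi^+$, the concatenation construction as written leaves a dangling transition $(f_\psi, s_\psi, \epsilon)$ to a deleted state, and the resulting automaton would accept only $L(M_\pi)\cdot L(M_\psi)$ rather than $L(M_\pi)\cdot L(M_\psi)^+$. The paper implicitly acknowledges this: immediately after the proposition it replaces $M_{\pi^+}$ by an $\epsilon$-free variant (Figure~\ref{img:NFA:kleene-plus-without-epsilon}) in which the back-edge targets the \emph{second} state rather than $s_\pi$, and it is this modified construction---which does preserve the no-incoming-edge invariant---that is actually used downstream. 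Your proof would be cleanest if you worked with that $\epsilon$-free version throughout, or else merged $s_\phi$ with $f_\pi$ rather than deleting it.
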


By construction, every NFA will have a single final state.
Moreover, because we do not include disjunction in path conditions, there is a unique transition from the initial state.
The constructions of $M_r$, $M_{\pi \comp \phi}$ and $M_{\pi^+}$ are illustrated in Figure~\ref{img:nfa-constructions-with-epsilon}.

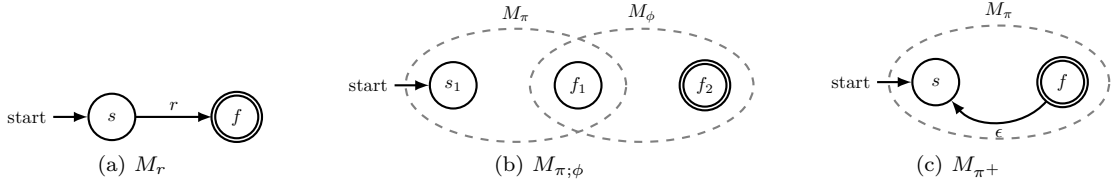
\begin{figure}[!ht]
  \subfloat[$M_r$]{
   \begin{tikzpicture}[->,>=latex,
           every state node/.style={draw,minimum width=20pt},thick,
           every node/.append style={scale=0.7, transform shape},
           ampersand replacement=\&]
        \node[initial,state] (q1) {$s$};
        \node[state,accepting,right=of q1] (f) {$f$};
        \draw[thick] (q1) to node[auto] {$r$} (f);
    \end{tikzpicture}
  }
  \hfill
  \subfloat[$M_{\pi \comp \phi}$]{
    \begin{tikzpicture}[->,>=latex,
           every state node/.style={draw,minimum width=20pt},thick,
           every node/.append style={scale=0.7, transform shape},
           ampersand replacement=\&]
        \draw[color=gray,dashed] (-0.275,0) ellipse [x radius=1.45cm, y radius=0.75cm];
        \node at (-0.275,0.95) {$M_{\pi}$};
        \draw[color=gray,dashed] (1.38,0) ellipse [x radius=1.47cm, y radius=0.75cm];
        \node at (1.38,0.95) {$M_{\phi}$};
        \matrix[column sep=1cm,row sep=1.25cm] at (0,0) {%
            \node[initial,state] (q1) {\entity{s_1}}; \&
            \node[state] (q2) {\entity{f_1}}; \&
            \node[state,accepting] (q3) {\entity{f_2}}; \\
        };
    \end{tikzpicture}}
   \hfill
   \subfloat[$M_{\pi^+}$]{
    \begin{tikzpicture}[->,>=latex,
           every state node/.style={draw,minimum width=20pt},thick,
           every node/.append style={scale=0.7, transform shape},
           ampersand replacement=\&]
        \draw[color=gray,dashed] (0.55,0) ellipse [x radius=1.45cm, y radius=0.75cm];
        \node at (0.55,0.95) {$M_{\pi}$};
        \matrix[column sep=1cm,row sep=1.25cm] at (0,0) {%
            \node[initial,state] (q1) {\entity{s}}; \&
            \node[state,accepting] (q2) {\entity{f}}; \\
        };
        \draw[thick] (q2) to [bend left=50] node[auto] {\rel{\epsilon}} (q1);
    \end{tikzpicture}}
    \caption{Schematic representations of NFAs for $r$, $\pi \comp \phi$ and $\pi^+$}\label{img:nfa-constructions-with-epsilon}
\end{figure}

The construction of the intersection NFA is simpler if the component NFAs do not contain empty transitions.
Accordingly, we modify the NFA for  $M_{\pi^+}$  to remove the empty transition.
Since every NFA representing a path condition has a single transition from the initial state, we may assume that we can write any path condition $\pi \ne \diamond$ in the form $r \comp \pi'$, where $r \in \RcoR$.
Hence, we may represent $\pi^+$ by the NFA shown in Figure~\ref{img:NFA:kleene-plus-without-epsilon:a}.
In the special case that $\pi = r$ for some $r \in \RcoR$, $\pi' = \diamond$ and the start and final states of $\pi'$ coincide, as shown in Figure~\ref{img:NFA:kleene-plus-without-epsilon:b}.

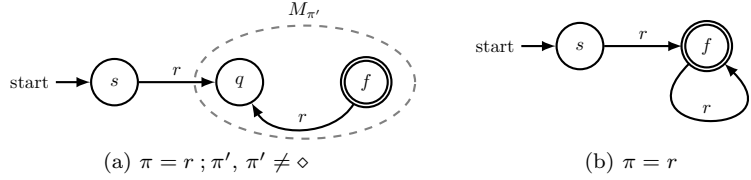
\begin{figure}[!ht]\centering
   \subfloat[$\pi = r \comp \pi'$, $\pi' \ne \diamond$]{
        \begin{tikzpicture}[->,>=latex,
               every state node/.style={draw,minimum width=20pt},thick,
               every node/.append style={scale=0.7, transform shape},
               ampersand replacement=\&]
        \draw[color=gray,dashed] (1.4,0) ellipse [x radius=1.45cm, y radius=0.75cm];
        \node at (1.4,0.95) {$M_{\pi'}$};
            \matrix[column sep=1cm,row sep=1.25cm] at (0,0) {%
                \node[initial,state] (v1) {\entity{s}}; \&
                \node[state] (v2) {\entity{q}}; \&
                \node[state,accepting] (v3) {\entity{f}}; \\
            };
            \draw[thick] (v1) to node[auto] {\rel{r}} (v2);
            \draw[thick] (v3) to [bend left=60] node[auto,swap] {\rel{r}} (v2);
        \end{tikzpicture}
        \label{img:NFA:kleene-plus-without-epsilon:a}
    } \quad
    \subfloat[$\pi = r$]{
        \begin{tikzpicture}[->,>=latex,
               every state node/.style={draw,minimum width=20pt},thick,
               every node/.append style={scale=0.7, transform shape},
               ampersand replacement=\&]
            \matrix[column sep=1cm,row sep=1.25cm] at (0,0) {%
                \node[initial,state] (v1) {\entity{s}}; \&
                \node[state,accepting] (v2) {\entity{f}}; \\
            };
            \draw[thick] (v1) to node[auto] {\rel{r}} (v2);
            \draw[thick] (v2.south west) to [loop below] node[auto] {$\rel{r}$} (v2.south east);
        \end{tikzpicture}
        \label{img:NFA:kleene-plus-without-epsilon:b}
    }
    \caption{$M_{\pi^+}$ without the empty transition}\label{img:NFA:kleene-plus-without-epsilon}
\end{figure}

\begin{example}
Consider the path condition
\[
 \overline{\left(\overline{r_1 \comp r_2^+}\right)^+ \comp (r_1 \comp r_3)^+}.
\]
Then, we may transform this into a simple path condition using the rules in Proposition~\ref{pro:simple-equivalences}.
\begin{align*}
 \overline{(\overline{r_1 \comp r_2^+})^+ \comp (r_1 \comp r_3)^+} %
   &= \overline{\left(r_1 \comp r_3\right)^+} \comp \overline{\left(\overline{r_1 \comp r_2^+}\right)^+} \\
   &= (\overline{r_1 \comp r_3})^+ \comp \overline{\left(\overline{r_1 \comp r_2^+}\right)}^+ \\
   &= (\overline{r_3} \comp \overline{r_1})^+ \comp (r_1 \comp r_2^+)^+
\end{align*}
The corresponding NFA is shown in Figure~\ref{fig:example-nfa-for-complexity}.
Note the number of states is $5$ and the number of transitions is $7$.
In Section~\ref{sec:requests:correctness_and_complexity} we establish the way in which the number of states and transitions vary with the structure of $\pi$.

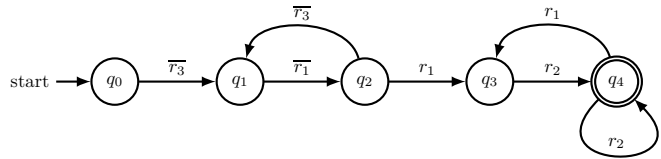
\begin{figure}[!ht]\centering
        \begin{tikzpicture}[->,>=latex,
               every state node/.style={draw,minimum width=20pt},thick,
               every node/.append style={scale=0.7, transform shape},
               ampersand replacement=\&]
            \matrix[column sep=1cm,row sep=1.25cm] at (0,0) {%
                \node[initial,state] (q0) {\entity{q_0}}; \&
                \node[state] (q1) {\entity{q_1}}; \&
                \node[state] (q2) {\entity{q_2}}; \&
                \node[state] (q3) {\entity{q_3}}; \&
                \node[state,accepting] (q4) {\entity{q_4}}; \\
            };
            \draw[thick] (q0) to node[auto] {$\overline{\rel{r_3}}$} (q1);
            \draw[thick] (q1) to node[auto] {$\overline{\rel{r_1}}$} (q2);
            \draw[thick] (q2) to [bend right=75] node[auto,swap] {$\overline{\rel{r_3}}$} (q1);
            \draw[thick] (q2) to node[auto] {\rel{r_1}} (q3);
            \draw[thick] (q3) to node[auto] {\rel{r_2}} (q4);
            \draw[thick] (q4) to [bend right=75] node[auto,swap] {\rel{r_1}} (q3);
            \draw[thick] (q4.south west) to [loop below] node[auto] {$\rel{r_2}$} (q4.south east);
        \end{tikzpicture}
    \caption{The NFA for $(\overline{r_3} \comp \overline{r_1})^+ \comp (r_1 \comp r_2^+)^+$} \label{fig:example-nfa-for-complexity}
\end{figure}
\end{example}

\subsubsection{Principal-matching}

The set of matched principals for a request is determined by identifying those principal-matching rules that are applicable to a given request $q = (s,o,a)$.
Recall that a system graph $G = (V,E)$ contains a set of nodes $V$ and a set of edges \mbox{$E \subseteq V \times V \times \RcoR$}, where $\RcoR$ is the set of relationship labels.
Given a request $q = (s,o,a)$ and the system graph $G = (V,E)$, we define the NFA $M_{q} = (V, \RcoR, E, s, \set{o})$.
Thus, every labelled edge in $G$ defines a transition and the start and final states are $s$ and $o$, respectively.

It is trivial to decide whether a request matches the $\all$ and $\none$ targets.
Hence, we only consider targets that are path conditions.
Informally, given a path condition $\pi$, a request $q$ and a system graph $G$, we wish to find a path in the directed graph $G$ (equivalently a word accepted by $M_q$) that is also a word accepted by $M_\pi$.
Thus, for a principal-matching rule $(\ppmc,\npmc,p)$ to be applicable to a request, where $\ppmc$ and $\npmc$ are path conditions, we require $\ppmc$ to be matched by the request and $\npmc$ to not be matched.
Therefore, we compute $L(M_{\ppmc}) \cap L(M_q)$ and $L(M_{\npmc}) \cap L(M_q)$; the former must be non-empty  and the latter must be empty.
We test these language properties by constructing two intersection automata, one from $M_{\ppmc}$ and $M_q$, the second from $M_{\npmc}$ and $M_q$.

\subsection{Complexity}\label{sec:requests:correctness_and_complexity}

The algorithms \textsf{ComputeAuthorizations} and \textsf{RequestEvaluation} do not involve significant computation.
The worst case time complexity of request evaluation is, therefore, dominated by the complexity of \textsf{ComputePrincipals}, which is dependent on two things:\footnote{The NFAs for path conditions contained in rules in the PMP can be pre-computed once and used, as required, to construct the intersection automata.}
\begin{itemize}
    \item the number of principal-matching rules to be evaluated; and
    \item the complexity of determining whether the intersection NFAs accept non-empty languages.
\end{itemize}
To evaluate the second of these factors, we define the \emph{length} $\ell(\pi)$ of a (simple) path condition $\pi$ to be:
    \begin{itemize}
        \item  $\ell(\pi) = 1$ if $\pi = r$ for some $r \in \RcoR$; 
        \item  $\ell(\pi \comp \pi') = \ell(\pi) + \ell(\pi')$;
        \item  $\ell(\pi^+) = \ell(\pi)$.
    \end{itemize}
In other words, $\ell(\pi)$ is simply the number of occurrences of elements in $\RcoR$ in $\pi$.
We now consider the size of the NFA $M_\pi$.

\begin{proposition}
Let $r \in \RcoR$, $\pi$ and $\phi$ be path conditions.  Then:
 \begin{itemize}
  \item $\card{Q_r} = 2$ and $\card{\delta_r} = 1$ for $r \in \RcoR$;
  \item $\card{Q_{\pi \comp \phi}} = \card{Q_{\pi}} + \card{Q_{\phi}} - 1$ and $\card{\delta_{\pi \comp \phi}} = \card{\delta_{\pi}} + \card{\delta_{\phi}}$;
  \item $\card{Q_{\pi^+}} = \card{Q_\pi}$ and $\card{\delta_{\pi^+}} = \card{\delta_{\pi}} + 1$.
 \end{itemize}
\end{proposition}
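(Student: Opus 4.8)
The plan is to read off each of the three equalities directly from the corresponding construction of $M_r$, $M_{\pi \comp \phi}$ and $M_{\pi^+}$ in the previous proposition (illustrated in Figures~\ref{img:nfa-constructions-with-epsilon} and~\ref{img:NFA:kleene-plus-without-epsilon}); the only genuine content is to verify that the unions of state sets and of transition relations appearing in those constructions are \emph{disjoint} unions, so that cardinalities add as claimed. Throughout I assume, as is standard and always achievable by renaming states, that the component automata $M_\pi$ and $M_\phi$ have disjoint state sets, and I use the two structural invariants recorded above: a path-condition NFA has exactly one transition leaving its initial state and exactly one accepting state. For $M_r = (\set{s,f},\set{r},\set{(s,f,r)},s,\set{f})$ the equalities $\card{Q_r} = 2$ and $\card{\delta_r} = 1$ are immediate.

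For the concatenation case, $Q_{\pi \comp \phi} = Q_\pi \cup Q_\phi \setminus \set{s_\phi}$; since $Q_\pi \cap Q_\phi = \emptyset$ and $s_\phi \in Q_\phi$, this set has exactly $\card{Q_\pi} + \card{Q_\phi} - 1$ elements. For the transitions I would put $D = \set{(x,y,z) \in \delta_\phi : x = s_\phi}$ and $D' = \set{(f_\pi,y,z) : (s_\phi,y,z) \in \delta_\phi}$, so that $\delta_{\pi \comp \phi} = (\delta_\pi \cup \delta_\phi \cup D') \setminus D = \delta_\pi \cup (\delta_\phi \setminus D) \cup D'$, and then check these three pieces are pairwise disjoint: $\delta_\pi$ and $\delta_\phi$ are disjoint because their transitions have sources in $Q_\pi$, respectively $Q_\phi$; $D'$ is disjoint from $\delta_\phi \setminus D$ because every element of $D'$ has source $f_\pi \notin Q_\phi$; and $D'$ is disjoint from $\delta_\pi$ because an element $(f_\pi,y,z)$ of $D'$ has $y \in Q_\phi$, hence $y \notin Q_\pi$, so it cannot lie in $\delta_\pi$. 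Since $(s_\phi,y,z) \mapsto (f_\pi,y,z)$ is a bijection $D \to D'$, we have $\card{D} = \card{D'}$, and therefore $\card{\delta_{\pi \comp \phi}} = \card{\delta_\pi} + (\card{\delta_\phi} - \card{D}) + \card{D'} = \card{\delta_\pi} + \card{\delta_\phi}$.

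For the Kleene-plus case, the construction leaves the state set unchanged, $Q_{\pi^+} = Q_\pi$, giving $\card{Q_{\pi^+}} = \card{Q_\pi}$; and it adjoins to $\delta_\pi$ a single new transition — the back-transition $(f_\pi,s_\pi,\epsilon)$, or, in the $\epsilon$-free variant of Figure~\ref{img:NFA:kleene-plus-without-epsilon}, the single edge from the accepting state back into the body of $M_\pi$. This transition is not already present (no transition of $\delta_\pi$ both leaves the accepting state and is labelled $\epsilon$, in the first form; in the $\epsilon$-free form freshness follows from the unique-initial-transition invariant), so $\card{\delta_{\pi^+}} = \card{\delta_\pi} + 1$; the degenerate sub-case $\pi = r$ of Figure~\ref{img:NFA:kleene-plus-without-epsilon:b} is checked by hand and gives $\card{Q_{r^+}} = 2$ and $\card{\delta_{r^+}} = 2 = \card{\delta_r}+1$.

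I expect the main obstacle to be the bookkeeping in the concatenation case: confirming that the transitions deleted (those leaving $s_\phi$) are matched one-for-one by the transitions added (those now leaving $f_\pi$), and that none of the added transitions collide with transitions already present — all of which rests on keeping the disjoint-state-sets hypothesis firmly in force. One minor caveat I would flag is that the Kleene-plus count can be off by one for syntactic forms $(\pi^+)^+$, which we exclude by applying the equivalence $(\pi^+)^+ \equiv \pi^+$ from Proposition~\ref{pro:simple-equivalences} before building the automaton.
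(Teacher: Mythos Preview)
Your proposal is correct and follows the same approach as the paper, which simply states that the result ``follows immediately by inspection of the NFAs in Figure~\ref{img:nfa-constructions-with-epsilon}.'' You have supplied considerably more detail than the paper does---in particular the disjointness bookkeeping for the concatenation case and the $(\pi^+)^+$ caveat---but the underlying method is identical.
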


\begin{proof}
 The proof follows immediately by inspection of the NFAs in Figure~\ref{img:nfa-constructions-with-epsilon}.
\end{proof}

\begin{corollary}
 Let $\pi$ be a simple path condition and let $\vartheta(\pi)$ denote the number of occurrences of $+$ in $\pi$.
 Then for path condition $\pi$, $\card{Q_\pi} = \ell(\pi) + 1$ and $\card{\delta_\pi} = \ell(\pi) + \vartheta(\pi)$.
\end{corollary}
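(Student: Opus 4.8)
The plan is to establish both equalities at once by structural induction on the simple path condition $\pi$, feeding the size relations of the preceding proposition into the recursive definition of $\ell$ together with the analogous recursion for $\vartheta$, namely $\vartheta(r) = 0$ for $r \in \RcoR$, $\vartheta((\phi)) = \vartheta(\phi)$, $\vartheta(\phi \comp \phi') = \vartheta(\phi) + \vartheta(\phi')$ and $\vartheta(\phi^+) = \vartheta(\phi) + 1$. The base case is $\pi = r$ for some $r \in \RcoR$: the proposition gives $\card{Q_r} = 2$ and $\card{\delta_r} = 1$, and since $\ell(r) = 1$ and $\vartheta(r) = 0$, both claimed identities hold.

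For the inductive step I would treat the three ways a non-edge simple path condition can be formed. If $\pi = (\phi)$ then $M_\pi = M_\phi$ while $\ell$ and $\vartheta$ are unchanged, so the claim transfers directly from the induction hypothesis. If $\pi = \phi \comp \phi'$, the proposition gives $\card{Q_\pi} = \card{Q_\phi} + \card{Q_{\phi'}} - 1$ and $\card{\delta_\pi} = \card{\delta_\phi} + \card{\delta_{\phi'}}$; substituting $\card{Q_\phi} = \ell(\phi)+1$, $\card{Q_{\phi'}} = \ell(\phi')+1$, $\card{\delta_\phi} = \ell(\phi) + \vartheta(\phi)$, $\card{\delta_{\phi'}} = \ell(\phi') + \vartheta(\phi')$ and using $\ell(\pi) = \ell(\phi) + \ell(\phi')$, $\vartheta(\pi) = \vartheta(\phi) + \vartheta(\phi')$ yields $\card{Q_\pi} = \ell(\pi) + 1$ and $\card{\delta_\pi} = \ell(\pi) + \vartheta(\pi)$. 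If $\pi = \phi^+$, the proposition gives $\card{Q_\pi} = \card{Q_\phi}$ and $\card{\delta_\pi} = \card{\delta_\phi} + 1$; since $\ell(\pi) = \ell(\phi)$ and $\vartheta(\pi) = \vartheta(\phi) + 1$, the induction hypothesis again delivers exactly the two identities. This closes the induction.

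The only point requiring care — and the closest thing to an obstacle — is the scope of the statement, since $\ell$ (and hence the claim) is not defined for $\pi = \diamond$. One should therefore read the corollary as applying to simple path conditions $\pi \ne \diamond$, and observe that this restriction is preserved under the induction: by Definition~\ref{def:simple-path-condition}, $\diamond$ never occurs as an immediate subterm of a compound simple path condition, so each $\phi$, $\phi'$ encountered above is itself a non-empty simple path condition to which the induction hypothesis legitimately applies. Beyond this bookkeeping the result is a routine consequence of the preceding proposition.
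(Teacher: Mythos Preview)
Your proof is correct and follows essentially the same structural-induction argument as the paper, invoking the preceding proposition's size relations at each constructor. You are in fact slightly more careful than the paper: you treat the parenthesisation case $(\phi)$ explicitly, spell out the recursion for $\vartheta$, and flag the exclusion of $\diamond$, all of which the paper leaves implicit.
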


\begin{proof}
 The result may be proved by a straightforward induction on the structure of $\pi$.
 Clearly, the result for $\card{Q_\pi}$ holds for path condition $\pi = r$, $r \in \RcoR$.
 Now consider path condition $\pi \comp \phi$ and assume the result holds for $\pi$ and $\phi$.
 Then \[ \card{Q_{\pi \comp \phi}} = \card{Q_{\pi}} + \card{Q_{\phi}} - 1 = (\ell(\pi) + 1) + (\ell(\phi) + 1) - 1 = \ell(\pi \comp \phi) + 1, \] as required.
 Finally, consider path condition $\pi^+$ and assume the result holds for $\pi$.
 Then \[ \card{Q_{\pi^+}} = \card{Q_{\pi}} = \ell(\pi) + 1 = \ell(\pi^+) + 1. \]

 Similarly, the result for $\card{\delta_\pi}$ holds for path condition $\pi = r$.
 Now consider path condition $\pi \comp \phi$ and assume the result holds for $\pi$ and $\phi$.
 Then
 \[
  \card{\delta_{\pi \comp \phi}} = \card{\delta_\pi} + \card{\delta_{\phi}} %
				 = \ell(\pi) + \vartheta(\pi) + \ell(\phi) + \vartheta(\phi) %
				 = \ell(\pi \comp \phi) + \vartheta(\pi \comp \phi).
 \]
 Finally, consider $\pi^+$ and assume the result holds for $\pi$.
 Then \[ \card{\delta_{\pi^+}} = \card{\delta_\pi} + 1 = \ell(\pi) + \vartheta(\pi) + 1 = \ell(\pi^+) + \vartheta(\pi^+). \]
\end{proof}

The complexity of computing the intersection NFA for NFAs $M$ and $M'$ is determined by the size of the respective transition relations, since we compute a product automaton whose transition relation is determined by the transition relations of the component NFAs.
In the worst case, the size of transition relation $\delta \subseteq Q \times Q \times \Sigma$ is $O(\card{Q}^2 \cdot \card{\Sigma})$.
However, in the case of our path condition NFAs, we have $\card{\delta_\pi} = \ell(\pi) + \vartheta(\pi)$.
The size of the transition relation in $M_q$ is $O(|\RcoR| \cdot \card{V}^2)$.
Thus, the overall complexity of evaluating a path condition $\pi$ with respect to a request and system graph $G$ is $O((\ell(\pi) + \vartheta(\pi)) \cdot |\RcoR| \cdot \card{V}^2)$.
Each principal-matching rule contains at most two path conditions as targets.
And each principal-matching rule in policy $\rho$ must be evaluated.
Thus the overall complexity of evaluating a request is $O(\card{\rho} \cdot \vartheta(\rho) \cdot |\RcoR| \cdot \card{V}^2)$, where $\vartheta(\rho) = \max\set{\ell(\pi) + \vartheta(\pi) : \pi \in \rho}$.%
\footnote{In our previous work we had made use of a (modified) breadth-first search algorithm which resulted in a worst case time complexity of request evaluation of \mbox{$O((|\rho| \cdot \ell(\rho) \cdot |V|) + (|\rho| \cdot |\RcoR| \cdot |V|^2))$}~\cite{CramptonS14}.}

\subsection{Target-based Request Evaluation}\label{sec:requests:target-based}

The request evaluation process described above evaluates every principal-matching rule to identify those principals applicable to a request.
It subsequently determines whether those principals are authorized to perform the requested action on the object.
This process is rather inefficient, as one or more of the principals matched in this way may not appear in any authorization rules associated with the requested object.
Accordingly, we now briefly consider ways in which the request evaluation process could be improved.
The natural approach is to use ``target-based'' evaluation of requests, as exemplified by XACML~\cite{XACML3} and other target-based languages~\cite{BrunsH11,CramptonM12}.

We outline a simple target-based strategy, leaving further development for future work.
Given a request $(s,o,a)$, the only principals that can be relevant to evaluating the request are those that appear in rules of the form $(p,x,y,b)$, where $x \in \set{o,\tau(o),\star}$ and $y \in \set{a,\star}$.
The resulting list of principals can be used to limit the principal-matching rules that are evaluated.
In the worst case, of course, we still have to evaluate the entire set of principal-matching rules.%
\footnote{Target-based request evaluation as described is only applicable for list-oriented policies (not for graph-based policies). Nevertheless, it is straightforward to modify target-based evaluation to work in conjunction with policy graphs.  We omit the details, again deferring this for future work.}

\section{Typed Edges}\label{sec:extended_typed_edges}
The basic RPPM model, described in the previous sections, offers a general model of relationship-based access control which naturally supports contextual information.
It can be employed to describe structurally simple social network systems or online social networks~\cite{ChengPS12passat,ChengPS12dbsec,Fong11}.
However, the RPPM model can equally well describe far more complex systems representing individual computers, computer networks and organisations~\cite{CramptonS14}.
Whilst the basic RPPM model can cover a wide range of systems, the compute principals step of request evaluation may be computationally intensive for very large system graphs or systems employing principal-matching policies which contain a very large number of rules.
Additionally, it may be necessary for an implementation to support certain useful policy frameworks within the access control model.
For example, reputation and history-based access control (HBAC) systems rely on knowledge of previous actions to inform decisions~\cite{AbadiF03,EdjlaliAC99,KrukowNS08}.
More generally, workflow systems may use previous activity to enforce constraints such as separation of duty~\cite{GligorGF98,SimonZ97} and ``Chinese Walls''~\cite{BrewerN89}.

In order to broaden the applicability of the model and to improve the performance of request evaluation, the basic RPPM model can be extended to support \emph{typed edges}, where each edge in the system graph has a type.
\begin{itemize}
    \item \emph{Relationship edges} are the standard edge type and are labelled from the basic model's set of relationship labels $\RcoR$;
    \item \emph{Caching edges} enhance the performance of request evaluation processing and are labelled with a set of principals;
    \item \emph{Decision audit edges} record the decisions from previous authorization requests and are labelled with an indication of whether a requested action $a$ was authorized $\audita{a}$ or denied $\auditd{a}$ to a subject on an object;
    \item \emph{Interest audit edges} record a subject's active or blocked interest, \interesta  or \interestb respectively, in an entity.
\end{itemize}
Whilst relationship edges may be directed or undirected,  edges of the other types are always directed away from the relevant subject.

Relationship edges are the foundation of the system graph, representing relationships between entities in the system.
When a system is first described using the RPPM model, the system graph will only contain relationship edges.
As requests are made and evaluated, edges of the other types may be added to the system graph and can, therefore, impact future request evaluations.
Whilst the existence of decision and interest audit edges can alter the outcome of request evaluation, as discussed in Section~\ref{sec:extended_typed_edges:decision_audit_edges} and~\ref{sec:extended_typed_edges:interest_audit_edges}, the existence of caching edges may simply allow the first step of request evaluation to be bypassed, thus speeding up request evaluation without altering the final decision which results.

\subsection{Caching Edges}\label{sec:extended_typed_edges:caching_edges}
The first step of request evaluation, described in Section~\ref{sec:requests:path-conditions-to-principal-matching}, determines the set of matched principals which apply to a request.
The target-based optimisation for request evaluation, described in Section~\ref{sec:requests:target-based}, offers a potential reduction in the processing required during the compute principals step.
However, the benefit is limited to simply excluding some of the principal-matching rules during the evaluation of individual requests.
Which rules, if any, that are excluded varies with each request but the remainder must be processed as before.
However, note that the set of matched principals for a subject-object pair remains static until a change is made to the system graph or certain associated policy components.
This observation suggests a more widely applicable optimization.

Accordingly, we introduce the concept of \emph{caching edges} and make use of the relative stability of matched principals in order to reduce the processing required for future authorization requests.
In particular, when we evaluate a request $(s,o,a)$ that results in a set of matched principals $\semantics{\rho} \subseteq P$, we add an edge $(s,o,\semantics{\rho})$ to the system graph, directed from $s$ to $o$ and labelled with $\semantics{\rho}$; this edge identifies the matching principals relevant to future requests of the form $(s,o,a')$.
The processing of subsequent authorization requests of the form $(s,o,a')$ can, therefore, skip the computationally expensive step of computing the matched principals and instead use $\semantics{\rho}$ in conjunction with the authorization rules to evaluate requests of this form.
Recall that the action of a request is not part of the compute principals step of request evaluation.

\begin{example}
Recall that $u_1$ is the teaching assistant for course $c_2$ and is thus associated with the principal $p_2$ and authorized to read and grade $a_3$.
Suppose that $u_1$ makes the request $(u_1,a_3,\act{read})$.
Then this request will be authorized because $\semantics{\rho} = \set{p_2}$.
At this stage, we may therefore add an edge $(u_1,a_3,\set{p_2})$, thereby caching the outcome of the principal-matching phase of request evaluation, as illustrated in Figure~\ref{img:system-graph-fragment1}.
Then a subsequent request $(u_1,a_3,\act{grade})$ will only need to determine that $p_2$ is associated with the subject-object pair $(u_1,a_3)$ and can immediately evaluate the authorization rules (and thereby authorize the request).
\begin{figure}[!ht]\centering\setlength{\extrarowheight}{2pt}
         \begin{tikzpicture}
            [node distance=1.5cm and 1.6cm, on grid,
            caching/.style={color=blue!70,>=open diamond}, 
            audit-a/.style={color=teal, densely dashed}, 
            audit-d/.style={color=red, densely dotted}, 
            interest-a/.style={color=purple,>=\interestahead}, 
            interest-b/.style={color=brown,>=\interestdhead}, 
            every circle node/.style={draw,minimum width=20pt},thick,
            every node/.append style={scale=0.7, transform shape}]
            \begin{scope}[>=latex] 
                \node[circle] (u1) {\entity{u_1}};
                \node[circle,left=of u1] (c1) {\entity{c_1}};
                \node[circle,right=of u1] (c2) {\entity{c_2}};
                \node[circle,left=of c1] (a1) {\entity{a_1}};
                \node[circle,right=of c2] (a3) {\entity{a_3}};
                \node[circle,below=of c1] (a2) {\entity{a_2}};
                \node[circle,above=of c1] (u2) {\entity{u_2}};
                \draw[thick,->] (u1) to node[swap,auto] {$r_1$} (c1);
                \draw[thick,->] (u2) to node[auto] {$r_5$} (c1);
                \draw[thick,->] (u1) to node[auto] {$r_3$} (c2);
                \draw[thick,->] (a1) to node[auto] {$r_2$} (c1);
                \draw[thick,->] (a3) to node[swap,auto] {$r_2$} (c2);
                \draw[thick,->] (a2) to node[auto] {$r_2$} (c1);
                \draw[thick,->] (u1) to node[auto] {$r_4$} (a2);
                \draw[caching,thick,->] (u1) to[bend left] node[auto] {$\set{p_2}$} (a3);
              \end{scope}
      \end{tikzpicture}
      \caption{Adding a caching edge}\label{img:system-graph-fragment1}
\end{figure}
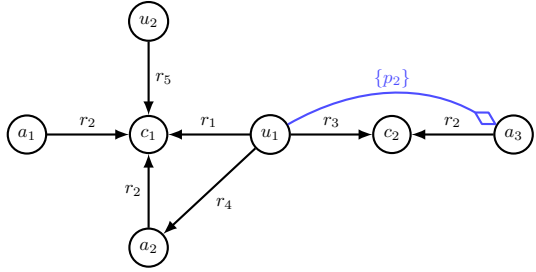
\end{example}

\subsubsection{Cache Management}\label{sec:extended_typed_edges:caching_edges:management}
Whilst the performance improvement offered by caching edges may be significant, there is a need to carefully manage any implementation of caching edges to prevent this improvement being countered by an indiscriminate increase in the number of edges in the system graph.
In the worst case, the number of caching edges directed out of a node is $O(|V|)$, where $V$ is the set of nodes in the system graph.
However, there are strategies that can be used to both prevent the system graph realizing the worst case and to reduce the impact of large numbers of caching edges.
To maintain an acceptable number of caching edges, we could, for example, use some form of cache purging~\cite{CramptonS14_STM}.
Further experimental work is required to determine how best to make use of caching edges.

\subsubsection{Preemptive Caching}\label{sec:extended_typed_edges:caching_edges:preemptive}
Any optimisation provided by the caching of matched principals relies upon the existence of a caching edge in order to reduce the authorization request processing.
The first request between a subject and object must, therefore, be processed normally in order to determine the set of matched principals which will label the caching edge.
If this initial evaluation were only performed when an authorization request were submitted, then the benefit of caching edges would be limited to repeated subject-object interactions alone.

However, many authorization systems will experience periods of time when no authorization requests are being evaluated.
The nature of many computing tasks is such that authorization is required sporadically amongst longer periods of computation by clients of the authorization system and idle time for the authorization system itself.
These periods of reduced load on the authorization system can be employed for the purpose of \emph{preemptive caching}~\cite{CramptonS14_STM}.
Again, further experimental work is required to evaluate how best to perform preemptive caching.

\subsection{Decision Audit Edges}\label{sec:extended_typed_edges:decision_audit_edges}
The basic form of the RPPM model is ``memoryless'' with respect to request evaluations.
The introduction of decision audit edges allows the system to record whether previously requested actions were authorized or denied.
Both authorized and denied decision audit edges are inserted, automatically, into the system graph after request evaluation completes.
If such an edge does not already exist, a decision audit edge is added between the subject and object of the evaluated request, indicating its result.%
\footnote{In some situations there may be merit in recording every occurrence of an action (by a subject on an object) being authorized or denied. Modifying the decision audit edge's label to include a count would be a simple way of achieving this if it were required.}

These edges can be utilised in a variety of ways depending on the requirements of the authorization system.
At their simplest, they provide a record which may be used as input for auditing or other processing outside of the authorization system.%
\footnote{In this form they may, for example, be used to identify potentially malicious actors.
A node with a large number of, or a sudden increase in, denied decision audit edges may be considered worthy of further investigation.}
Within the authorization system, decision audit edges can be used to match part of a path condition, thus enabling authorization decisions to be made based on historical evidence.

The caching edges we introduced in the previous section arise from the principal-matching part of the request evaluation process.
An audit edge arises from the second phase of the evaluation process.
We extend the set of relationship labels by defining the relationships  $\audita{a}$ and $\auditd{a}$ for each action $a$.
\begin{itemize}
 \item If the decision for request $(s,o,a)$ is allow, then we add the edge $(u,o,\audita{a})$ to the system graph.
 \item Conversely, if the decision for request $(u,o,a)$ is deny, then we add the edge $(u,o,\auditd{a})$ to the system graph.
\end{itemize}

Principal-matching rules can be created to make direct use of decision audit edges.
Some obvious examples include:
 \begin{itemize}
  \item the principal-matching rule $(\all, \audita{a}, p)$ can be used to match the principal $p$ to any request where the subject has not previously performed the action $a$ on the object;
  \item the rule $(\all, \auditd{a}, p)$ requires that the subject has never been denied action $a$ on the object;
  \item the rule $(\audita{a}, \none, p)$ requires that the subject must have previously had a request to perform action $a$ on the object approved.
 \end{itemize}

\begin{example}\label{ex:higher-education-2}
Returning to our higher education example, suppose that we have a student $u_3$ who is enrolled on course $c_2$ and is the author of coursework $a_3$.
Then $u_1$, the teaching assistant for the course will, at some point, grade the coursework $a_3$.
At this point, $u_3$ should not be able to modify $a_3$.
We could enforce this requirement by modifying the principal-matching rule $(r_4,\none,p_1)$---which assigns any user who is the owner of a piece of coursework to the author principal---to $(r_4,r_1 \comp \overline{r_3} \comp \audita{grade},p_1)$.
This rule includes a second path condition $r_1 \comp \overline{r_3} \comp \audita{grade}$, one that must not be matched if the principal $p_1$ is to apply to a request.
This path condition traces a path from enrolled student to course to teaching assistant to (graded) coursework.
Figure~\ref{img:he-system-graph-fragment2} illustrates the system graph once the teaching assistant has graded $a_3$; we represent allow audit edges using dashed lines.
Note that there is a path from $u_3$ to $a_3$ matching the prohibited path condition.

Of course, in practice $u_3$ will still wish to read $a_3$, so we might wish to specify a separate rule, rather than modify the existing rule.
This separate rule could have the form $(r_1 \comp \overline{r_3} \comp \audita{grade}, \none, p)$ and then we specify an additional authorization rule $(p,\star,\act{write},0)$ which explicitly denies principal $p$ write access to any object.

\begin{figure}[!ht]\centering
         \begin{tikzpicture}
            [node distance=1.5cm and 1.6cm, on grid,
            caching/.style={color=blue!70,>=open diamond}, 
            audit-a/.style={color=teal, densely dashed}, 
            audit-d/.style={color=red, densely dotted}, 
            interest-a/.style={color=purple,>=\interestahead}, 
            interest-b/.style={color=brown,>=\interestdhead}, 
            every circle node/.style={draw,minimum width=20pt},thick,
            every node/.append style={scale=0.7, transform shape}]
            \begin{scope}[>=latex] 
                \node[circle] (u1) {\entity{u_1}};
                \node[circle,left=of u1] (c1) {\entity{c_1}};
                \node[circle,right=of u1] (c2) {\entity{c_2}};
                \node[circle,left=of c1] (a1) {\entity{a_1}};
                \node[circle,right=of c2] (a3) {\entity{a_3}};
                \node[circle,below=of c1] (a2) {\entity{a_2}};
                \node[circle,above=of c1] (u2) {\entity{u_2}};
                \node[circle,above=of c2] (u3) {\entity{u_3}};
                \draw[thick,->] (u1) to node[swap,auto] {$r_1$} (c1);
                \draw[thick,->] (u2) to node[auto] {$r_5$} (c1);
                \draw[thick,->] (u1) to node[auto] {$r_3$} (c2);
                \draw[thick,->] (a1) to node[auto] {$r_2$} (c1);
                \draw[thick,->] (a3) to node[swap,auto] {$r_2$} (c2);
                \draw[thick,->] (a2) to node[auto] {$r_2$} (c1);
                \draw[thick,->] (u1) to node[auto] {$r_4$} (a2);
                \draw[thick,->] (u3) to node[auto] {$r_1$} (c2);
                \draw[thick,->] (u3) to node[auto] {$r_4$} (a3);
                \draw[audit-a,->] (u1) to[bend right] node[auto,swap] {$\audita{grade}$} (a3);
              \end{scope}
      \end{tikzpicture}
     \caption{Using audit edges to enforce complex policy requirements}\label{img:he-system-graph-fragment2}
\end{figure}
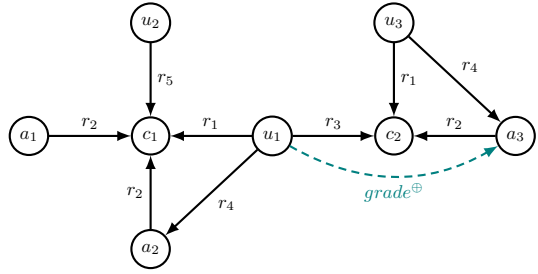
\end{example}

\subsubsection{Separation of Duty}\label{sec:extended_typed_edges:decision_audit_edges:separation_of_duty}
Whilst decision audit edges can be used in an ad hoc manner to enforce application-specific constraints, we can also use them to enforce separation-of-duty in a systematic way.
Separation of duty requires that certain combinations of actions are performed by a number of distinct individuals so as to reduce the likelihood of abuse of a system.
In its simplest form, separation of duty constraints require two individuals to each perform one of a pair of distinct actions so that a single individual cannot abuse the system.
A common application environment for such constraints is that of a finance system, where, for example, the individual authorized to add new suppliers should not be the same individual who is authorized to approve the payment of invoices to suppliers.
If a single individual were able to perform both of these actions they could set themselves up as a supplier within the finance system and then approve for payment any invoices they submitted as that supplier.

We define a mechanism here through which $n$ (distinct) users are required to perform $n$ actions associated with an object.
Let us consider the system graph $G_2$ (see Figure~\ref{img:system-graph-fragment2:a}), a set of actions $A = \set{a_1,\dots,a_m}$, $m \geqslant n$, and the following policies
\begin{align*}
    \rho &= \set{(r,\none,p)} \\
    \varrho &= \set{(p,o,\star,1)} \\
    \crs &= \crso{DenyOverride}
\end{align*}
With these policies (whether we use audit edges or not), if individual $u_1$ makes the request $q_1 = (u_1,o,a_1)$ this will be authorized by matching principal $p$, as will subsequent requests $q_2 = (u_1,o,a_2)$ and $q_3 = (u_1,o,a_3)$.
A similar result would have occurred if these requests had been submitted with $u_2$ or $u_3$ as the subject.

\begin{figure}[!ht]\centering
    \subfloat[system graph fragment]{
        \begin{tikzpicture}
            [node distance=1cm and 1.5cm,
            caching/.style={color=blue!70,>=open diamond}, 
            audit-a/.style={color=teal, densely dashed}, 
            audit-d/.style={color=red, densely dotted}, 
            interest-a/.style={color=purple,>=\interestahead}, 
            interest-b/.style={color=brown,>=\interestdhead}, 
              every circle node/.style={draw,minimum width=20pt},thick,
              every node/.append style={scale=0.7, transform shape}]
              \begin{scope}[>=latex] 
            	  \node[circle] (u1) {\entity{u_1}};
            	  \node[circle,right=of u1] (o) {\entity{o}};
            	  \node[circle,below=of o] (u2) {\entity{u_2}};
            	  \node[circle,right=of o] (u3) {\entity{u_3}};
            	  \draw[thick,->] (u1) to node[auto] {\rel{r}} (o);
            	  \draw[thick,->] (u2) to node[auto] {\rel{r}} (o);
            	  \draw[thick,->] (u3) to node[swap,auto] {\rel{r}} (o);
              \end{scope}
        \end{tikzpicture}
        \label{img:system-graph-fragment2:a}
    }\qquad
     \subfloat[after request $q_6 = (u_2,o,a_3)$]{
        \begin{tikzpicture}
            [node distance=1cm and 2cm,
            caching/.style={color=blue!70,>=open diamond}, 
            audit-a/.style={color=teal, densely dashed}, 
            audit-d/.style={color=red, densely dotted}, 
            interest-a/.style={color=purple,>=\interestahead}, 
            interest-b/.style={color=brown,>=\interestdhead}, 
              every circle node/.style={draw,minimum width=20pt},thick,
              every node/.append style={scale=0.7, transform shape}]
              \begin{scope}[>=latex] 
            	  \node[circle] (u1) {\entity{u_1}};
            	  \node[circle,right=of u1] (o) {\entity{o}};
            	  \node[circle,below=of o] (u2) {\entity{u_2}};
            	  \node[circle,right=of o] (u3) {\entity{u_3}};
            	  \draw[thick,->] (u1) to node[auto] {\rel{r}} (o);
            	  \draw[thick,->] (u2) to node[auto] {\rel{r}} (o);
            	  \draw[thick,->] (u3) to node[swap,auto] {\rel{r}} (o);
                  \draw[audit-a,thick,->] (u1) to [bend left=35] node[auto,color=black] {\rel{\audita{a_1}}} (o);
                  \draw[audit-d,thick,->] (u1) to [bend right=25] node[swap,auto,color=black] {\rel{\auditd{a_2}}} (o);
                  \draw[audit-d,thick,->] (u1) to [bend right=55] node[swap,auto,color=black] {\rel{\auditd{a_3}}} (o);
                  \draw[audit-a,thick,->] (u2) to [bend right] node[swap,auto,color=black] {\rel{\audita{a_3}}} (o);
                  \draw[audit-a,thick,->] (u3) to [bend right=35] node[swap,auto,color=black] {\rel{\audita{a_2}}} (o);
                  \draw[audit-d,thick,->] (u3) to [bend left=35] node[auto,color=black] {\rel{\auditd{a_3}}} (o);
              \end{scope}
        \end{tikzpicture}
        \label{img:system-graph-fragment2:d}
    }
    \caption{Adding decision audit edges}\label{img:system-graph-fragment2}
\end{figure}
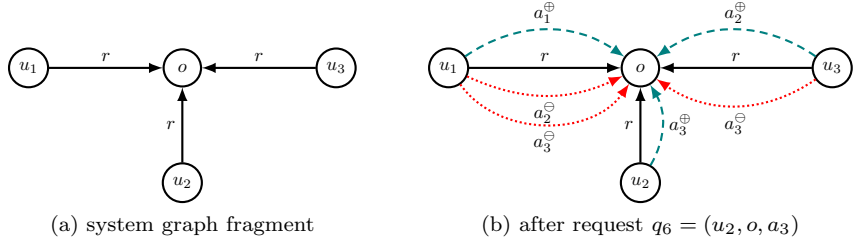

Now suppose we wish to restrict each user to a single interaction with $o$.
Then we define the policies
\begin{align*}
 \rho' &= \set{(\audita{a_i},\none,p_i) : 1 \leqslant i \leqslant n} \cup \rho \\
 \varrho' &= \set{(p_i,o,a_j,0) : 1 \leqslant i \leqslant n, j \ne i} \cup \varrho \\
\end{align*}
Then, assuming there are no audit edges in the system graph, request $(u_1,o,a_1)$ matches the rule in $\rho$, as before, and the request is authorized by the rule in $\varrho$.
However, assuming the audit edge $(u_1,o,\audita{a_1})$ is now added to the system graph, a subsequent request $(u_1,o,a_2)$ will match the new rule $(\audita{a_1},\none,p_1)$, leading to a deny decision (because of the new rule $(p_1,o,a_2,0)$).
At this point a deny audit edge will be added to the system graph.
Similarly, request $(u_1,o,a_3)$ will be denied, and any attempt by $u_2$ or $u_3$ to perform two different actions will result in at most one allow decision.
The case $n=3$ is illustrated in Figure~\ref{img:system-graph-fragment2:d}, where each of the three users is permitted to perform one of the three actions.
More formally, we have the following result.

\begin{proposition}\label{prop:separation-of-duty}
    Given an RPPM separation of duty policy, as described above, for any user $u$ the request $(u,o,a)$ is allowed if the request is authorized by $\rho'$ and $(\varrho',\chi)$ and no request of the form $(u,o,a')$ has been previously authorized where $a' \neq a$ and $a, a' \in \set{a_1,\dots,a_n}$. The request is denied otherwise.
\end{proposition}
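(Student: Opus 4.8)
The plan is to unfold the request-evaluation algorithm on the specific policies $\rho'$, $\varrho'$ and $\chi = \crso{DenyOverride}$, and to read off the final decision for a request $q = (u,o,a)$ as a function of just two pieces of data: whether $(u,o,r)\in E$, and which actions $u$ has previously been authorized to perform on $o$. First I would record the invariant that the algorithm maintains: immediately before $q$ is evaluated, the current system graph contains the edge $(u,o,\audita{a'})$ exactly when some earlier request $(u,o,a')$ was authorized, and contains $(u,o,\auditd{a'})$ exactly when some earlier request $(u,o,a')$ was denied. This is immediate from the audit-edge insertion rule of Section~\ref{sec:extended_typed_edges:decision_audit_edges}. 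Only the $\audita{\cdot}$ edges will be relevant, since no target appearing in $\rho'$ carries an $\auditd{\cdot}$ label.

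Next I would compute the set of matched principals $\semantics{\rho'}$ for $q$ by running Algorithm~\ref{alg:compute-principals}. The rule $(r,\none,p)$ is applicable iff $G,u,o \models r$, i.e.\ iff $(u,o,r)\in E$; and for $1\le i\le n$ the rule $(\audita{a_i},\none,p_i)$ is applicable iff $(u,o,\audita{a_i})\in E$, i.e., by the invariant, iff $(u,o,a_i)$ was previously authorized. Hence $p\in\semantics{\rho'}$ iff $(u,o,r)\in E$, and $p_i\in\semantics{\rho'}$ iff $u$ has previously been authorized for $a_i$ on $o$. I would then compute $\semantics{\rho',\varrho'}$ for $q$, taking $a = a_k$ with $1\le k\le n$: the rule $(p,o,\star,1)$ contributes $1$ iff $p\in\semantics{\rho'}$, i.e.\ iff $(u,o,r)\in E$; and a rule $(p_i,o,a_j,0)$ with $j\ne i$ is applicable iff $p_i\in\semantics{\rho'}$ and $a_j = a_k$, i.e.\ iff $i\ne k$ and $u$ was previously authorized for $a_i$ on $o$. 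Thus $0\in\semantics{\rho',\varrho'}$ iff $u$ has previously been authorized for some $a_i$ with $i\in\{1,\dots,n\}\setminus\{k\}$. Applying $\crso{DenyOverride}$ and then Algorithm~\ref{alg:request-eval}, the final decision is allow iff $(u,o,r)\in E$ and no such prior authorization exists, and is deny otherwise. Since the request, evaluated with only the base rules (equivalently, on the graph before any audit edge is added, i.e.\ against $\rho$ and $(\varrho,\chi)$), is authorized exactly when $(u,o,r)\in E$, this is precisely the assertion of the proposition.

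The only point that needs a little extra care is the clause covering a user $u$ with $(u,o,r)\notin E$. For such a $u$ I would argue by induction on the sequence of evaluated requests that $u$ never acquires an $\audita{\cdot}$ edge to $o$: initially there are no audit edges, so for any request $(u,o,\cdot)$ we get $\semantics{\rho'}=\emptyset$ and the request is denied by the system-wide default, adding only a $\auditd{\cdot}$ edge; and so long as $u$ has no $\audita{\cdot}$ edge to $o$, the same computation recurs. Hence both sides of the claimed equivalence are false for such $u$, and the substantive case is the one treated above. I expect this bookkeeping---keeping the audit-edge invariant and the degenerate case straight---to be the only real obstacle; everything else is a direct unfolding of Definitions~\ref{def:matched-principals} and~\ref{def:auth-decisions} together with the algorithms.
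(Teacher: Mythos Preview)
Your proposal is correct and follows essentially the same reasoning as the paper's proof. The paper organizes the argument as an induction on the number of evaluated requests, whereas you factor out the audit-edge invariant up front (correctly observing that it is immediate from the insertion rule) and then carry out a direct computation of $\semantics{\rho'}$ and $\semantics{\rho',\varrho'}$; the case analysis on whether $u$ has a prior $\audita{a_i}$ edge and whether $a_i=a$ is identical in substance to the paper's inductive step.
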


\begin{proof}
    The proof proceeds by induction on the number of evaluated requests.
    Consider the (base) case when no requests have yet been made.
    A request $(u,o,a)$ where $a \in \set{a_1, \dots, a_n}$ will not match $(\audita{a_1},\none,p_i)$ for any $i$, $1 \leqslant i \leqslant n$, as no decision audit edges currently exist in the system graph.
    Thus request $(u,o,a)$ will be authorized if it is authorized by $\rho$ and $(\varrho, \chi)$ (and hence will be authorized by $\rho'$ and $(\varrho',\chi)$).

    Now suppose the result holds for all sequences of $m$ requests and consider the request $(u,o,a)$ where $a \in \set{a_1, \dots, a_n}$.
    \begin{itemize}
        \item If $u$ has previously performed a constrained action $a_i$, $1 \leqslant i \leqslant n$, then the request will satisfy principal-matching rule $(\audita{a_i}, \none, p_i)$.
	
	      Now, if $a_i = a$, there is no authorization rule of the form $(p_i,o,a_i,0)$ and the request will, therefore, be authorized if and only if it is authorized by $\rho$ and $(\varrho, \chi)$.
	
	      Conversely, if $a_i \ne a$, then $a = a_j$, for some $j \ne i$, and the authorization rule $(p_i, o, a, 0)$, together with the \crso{DenyOverride} CRS will cause the request to be denied.
        \item If user $u$ has not previously performed a constrained action then the request will not match any of the principal-matching rules that were added to create $\rho'$.
	      Thus the request will only be authorized if it is authorized by $\rho$ and $(\varrho, \chi)$.
    \end{itemize}
\end{proof}

Whilst we have proposed a mechanism for separation of duty employing the $\crso{DenyOverride}$ conflict resolution strategy, it would be equally possible to employ list-oriented policies (as described in Section~\ref{sec:model:policy-extensions:list-oriented-policies}) assuming the added constraint rules are inserted at the start of the principal-matching policy.

\subsection{Interest Audit Edges}\label{sec:extended_typed_edges:interest_audit_edges}
In a similar way to decision audit edges, interest audit edges record information related to previous requests which can be utilised to make future decisions.
However, whilst the decision audit edges record the direct result of a previous request evaluation, interest audit edges record the higher level notion of ``interest'' associated with those requests.
A subject who requests to perform an action on an object can be considered to be showing an interest in that object (or an entity which that object is related to).
An authorization system may be configured to use the record of this interest to determine whether a future request on that, or another, object should be approved or denied.
The primary use case for interest audit edges is the support of Chinese Wall policies.

\subsubsection{Chinese Wall}\label{sec:extended_typed_edges:interest_audit_edges:chinese_wall}
The Chinese Wall principle may be used to control access to information in order to prevent any conflicts of interest arising.
The standard use case concerns a consultancy that provides services to multiple clients, some of which are competitors.
It is important that a consultant does not access documents of company $c$ if she has previously accessed documents of a competitor of $c$.

To support the Chinese Wall policy, data is classified using conflict of interest classes to indicate groups of competitor entities~\cite{BrewerN89}.
Requests to access a company's resources within a conflict of interest class will only be authorized if no previous request was authorized accessing resources from another company in that conflict of interest class.

Let us suppose that for a given conflict of interest class $c$ and every user $u$, we have $G,u,c \models \pi_1$, and for every resource $o$, we have $G,o,c \models \pi_2$.
Then for this conflict of interest class we have $G,u,o \models \pi_1 \comp \overline{\pi_2}$ for every user and resource.
This arrangement is depicted, conceptually, in Figure~\ref{img:chinesewallconcept:b}.%
\footnote{Figure~\ref{img:chinesewallconcept} does not show a system graph; it shows high-level representations of the ``shape'' of a system graph.}
We introduce a logical entity into the system graph to represent a conflict of interest class and the relationship $m$, where an edge $(c,i,m)$ indicates company $c$ is a member of conflict of interest class $i$.
(We assume here that membership of conflict of interest classes is determined when the system graph is initially populated and remains fixed through the lifetime of the system.)

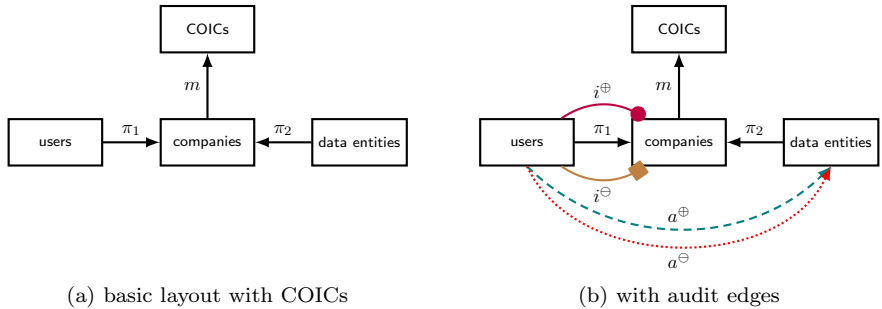
\begin{figure}[!ht]\centering
     \subfloat[basic layout with COICs]{
        \begin{tikzpicture}
            [node distance=1.5cm and 2cm, on grid,
            caching/.style={color=blue!70,>=open diamond}, 
            audit-a/.style={color=teal, densely dashed}, 
            audit-d/.style={color=red, densely dotted}, 
            interest-a/.style={color=purple,>=\interestahead}, 
            interest-b/.style={color=brown,>=\interestdhead}, 
            entity/.style={rectangle,draw,minimum width=50pt,minimum height=25pt},thick,
            every node/.append style={scale=0.7, transform shape}]
            \begin{scope}[>=latex] 
                \node[entity] (u) {\textsf{\footnotesize users}};
                \node[entity,right=of u] (c) {\textsf{\footnotesize companies}};
                \node[entity,right=of c] (d) {\textsf{\footnotesize data entities}};
                \node[entity,above=of c] (i) {\textsf{\footnotesize COICs}};
                \draw[thick,->] (u) to node[auto] {\rel{\pi_1}} (c);
                \draw[thick,->] (d) to node[swap,auto] {\rel{\pi_2}} (c);
                \draw[thick,->] (c) to node[auto] {\rel{m}} (i);
                \draw[audit-a,thick,->,opacity=0] (u.south) to [bend right=45] node[swap,auto,color=black] {\rel{\audita{a}}} (d.south);
                \draw[audit-d,thick,->,opacity=0] (u.south) to [bend right=65] node[swap,auto,color=black] {\rel{\auditd{a}}} (d.south);
            \end{scope}
        \end{tikzpicture}
        \label{img:chinesewallconcept:b}
    }\qquad
     \subfloat[with audit edges]{
        \begin{tikzpicture}
            [node distance=1.5cm and 2cm, on grid,
            caching/.style={color=blue!70,>=open diamond}, 
            audit-a/.style={color=teal, densely dashed}, 
            audit-d/.style={color=red, densely dotted}, 
            interest-a/.style={color=purple,>=\interestahead}, 
            interest-b/.style={color=brown,>=\interestdhead}, 
            entity/.style={rectangle,draw,minimum width=50pt,minimum height=25pt},thick,
            every node/.append style={scale=0.7, transform shape}]
            \begin{scope}[>=latex] 
                \node[entity] (u) {\textsf{\footnotesize users}};
                \node[entity,right=of u] (c) {\textsf{\footnotesize companies}};
                \node[entity,right=of c] (d) {\textsf{\footnotesize data entities}};
                \node[entity,above=of c] (i) {\textsf{\footnotesize COICs}};
                \draw[thick,->] (u) to node[auto] {\rel{\pi_1}} (c);
                \draw[thick,->] (d) to node[swap,auto] {\rel{\pi_2}} (c);
                \draw[thick,->] (c) to node[auto] {\rel{m}} (i);
                \draw[interest-a,thick,->] (u) to [bend left=35] node[auto,color=black] {\rel{\interesta}} (c);
                \draw[interest-b,thick,->] (u) to [bend right=35] node[swap,auto,color=black] {\rel{\interestb}} (c);
                \draw[audit-a,thick,->] (u.south) to [bend right=45] node[auto,color=black] {\rel{\audita{a}}} (d.south);
                \draw[audit-d,thick,->] (u.south) to [bend right=65] node[swap,auto,color=black] {\rel{\auditd{a}}} (d.south);
            \end{scope}
      \end{tikzpicture}
        \label{img:chinesewallconcept:c}
    }
    \caption{Chinese Wall generalisation}\label{img:chinesewallconcept}
\end{figure}

We now introduce interest audit edges into the system graph which are added between users and companies (see Figure~\ref{img:chinesewallconcept:c}).
Active interest audit edges are labelled with \interesta, whilst blocked interest audit edges are labelled with \interestb.
We, therefore, extend the set of relationships $\RcoR$ to include the set $\setinterest$, thus allowing the system graph to support these new edges.%
\footnote{Whilst we do not rely upon decision audit edges to enforce Chinese Wall policies they, equally, do not interfere with interest audit edges.
Our discussion of Chinese Wall will consider an authorization system which is also supporting decision audit edges so as to provide a more complete picture of an extended RPPM model.}
Therefore, when users are authorized (or denied) access to particular data entities, authorized (or denied) decision audit edges will result for these requests as shown in Figure~\ref{img:chinesewallconcept:c}.

Given a system graph $G = (V,E)$ such that $G,u,c \models \pi_1$ and $G,o,c \models \pi_2$ for all users $u$, all objects $o$ and all companies $c$ with membership of a given conflict-of-interest class $i$, the principal-matching rule $(\pi_1 \comp \overline{\pi_2},\none,p)$ ensures that every request of the form $(u,o,\act{read})$ is matched to principal $p$.
Hence the authorization rule $(p,\star,\act{read},1)$ authorizes every $u$ to read every $o$.
Now suppose that we wish to extend this basic policy and enforce a Chinese Wall policy, which requires that if a user $u$ reads a document belonging to company $c$ where $(c, i, m) \in E$ then $u$ must not read any document belonging to $c'$, where $c' \ne c$ and $(c', i, m) \in E$.
Then we redefine the principal-matching rule to be $(\pi_1 \comp \overline{\pi_2}, \interestb \comp \overline{\pi_2},p)$.\footnote{If our base principal-matching rule employed a prohibited target already then we could still enforce the Chinese Wall policy by inserting an additional principal-matching rule of the form $(\interestb \comp \overline{\pi_2},\none,p_{block})$ where the principal $p_{block}$ is denied all actions on all objects~\cite{CramptonS14_STM}.}
Consider an initial request $(u,o,\act{read})$, where $o$ is a document owned by $c$ a member of conflict-of-interest class $i$.
If principal $p$ is matched, that is
\[
 G,u,o \models \pi_1 \comp \overline{\pi_2}\quad \text{and}\quad G,u,o \not\models \interestb \comp \overline{\pi_2},
\]
then the request is authorized (since the principal $p$ is matched) and the following edges are added to $G$:
\begin{itemize}
 \item $(u,c,\interesta)$;
 \item $(u,c',\interestb)$ for all $c' \ne c$ where $(c,i,m) \in E$ and $(c',i,m) \in E$; and
 \item $(u,o,\audita{read})$.
\end{itemize}
Consider a subsequent request $(u,o',\act{read})$, where $o'$ is owned by $c' \ne c$ and $c'$ belongs to the same conflict-of-interest class as $c$.
Then $G,u,o' \models \interestb \comp \overline{\pi_2}$ and principal $p$ is no longer matched and the request will be denied (assuming we deny by default).

This is illustrated in Figure~\ref{img:system-graph-fragment4}, where $\pi_1 = w \comp s$ and $\pi_2 = d$.
A member of staff $u_1$ works for a consultancy firm $e_1$ that acts on behalf of clients ($c_1$, $c_2$ and $c_3$) and stores data about the commercial interests of those clients in the form of files ($f_1$, $f_2$, $f_3$ and $f_4$).
We denote edges of the form $(u,c,\interesta)$ with a filled circle head and those of the form $(u,c,\interestb)$ with a filled square head.
The figure illustrates the system graph before and after requests $(u_1,f_1,\act{read})$ and $(u_1,f_4,\act{read})$ have been authorized.
This results in additional edges in the system graph, notably $(u_1,c_2,\interestb)$, which means that $G,u_1,f_2 \models \interestb \comp \overline{\pi_2}$ and the request $(u_1,f_2,\act{read})$ would be denied (since $p$ would not be matched).
Note that request $(u_1,f_3,\act{read})$ would be permitted because $G,u_1,f_3 \not\models \interestb \comp \overline{\pi_2}$.
The audit and interest edges added through evaluation of these two further requests are also shown in Figure~\ref{img:system-graph-fragment4:c}.

Our discussion has so far considered the case where a single path of relationships exists between users and companies and between objects and companies; in reality there may be multiple alternative paths and the basic layout can be adjusted to enable this.\footnote{The key components of the basic layout are the existence of a subset of system graph entities $C \subset V$ (in our example companies) connecting users to objects, the fact that each $c \in C$ is a member of at most one conflict of interest classes $i$, and the fact that $C$ is the range for interest audit edges.}
To support such multi-path scenarios, the approach described above can be generalised as follows.
Let $\pi_1, \dots, \pi_n$ where $n \geqslant 1$ be paths between users and companies, and let $\pi'_1, \dots, \pi'_m$ where $m \geqslant 1$ be paths between objects and companies.
The set of principal-matching rules required to authorise users to access objects (through all possible combinations of these paths) is \[
    \set{(\pi_i \comp \overline{\pi'_j},\; \none,\; p) : 1 \leqslant i \leqslant n, 1 \leqslant j \leqslant m}.
\]
In order to support Chinese Wall policies, these rules are modified to
\[
    \set{(\pi_i \comp \overline{\pi'_j},\; \interestb \comp \overline{\pi'_j},\; p) : 1 \leqslant i \leqslant n, 1 \leqslant j \leqslant m}.
\]

\begin{figure}[!ht]\centering
    \subfloat[system graph fragment]{
        \begin{tikzpicture}
            [node distance=1.5cm and 1.6cm, on grid,
            caching/.style={color=blue!70,>=open diamond}, 
            audit-a/.style={color=teal, densely dashed}, 
            audit-d/.style={color=red, densely dotted}, 
            interest-a/.style={color=purple,>=\interestahead}, 
            interest-b/.style={color=brown,>=\interestdhead}, 
            every circle node/.style={draw,minimum width=20pt},thick,
            every node/.append style={scale=0.7, transform shape}]
            \begin{scope}[>=latex] 
                \node[circle] (u1) {\entity{u_1}};
                \node[circle,below=of u1] (e1) {\entity{e_1}};
                \node[circle,left=of e1] (c1) {\entity{c_1}};
                \node[circle,below=of e1] (c2) {\entity{c_2}};
                \node[circle,right=of e1] (c3) {\entity{c_3}};
                \node[circle,above=of c1] (f1) {\entity{f_1}};
                \node[circle,left=of c1] (f4) {\entity{f_4}};
                \node[circle,above=of c3] (f3) {\entity{f_3}};
                \node[circle,below=of c2] (f2) {\entity{f_2}};
                \node[circle,left=of c2] (i1) {\entity{i_1}};
                \node[circle,right=of c2] (i2) {\entity{i_2}};
                \draw[thick,->] (u1) to node[swap,auto] {\rel{w}} (e1);
                \draw[thick,->] (e1) to node[swap,auto] {\rel{s}} (c1);
                \draw[thick,->] (e1) to node[swap,auto] {\rel{s}} (c2);
                \draw[thick,->] (e1) to node[auto] {\rel{s}} (c3);
                \draw[thick,->] (f1) to node[swap,auto] {\rel{d}} (c1);
                \draw[thick,->] (f2) to node[auto] {\rel{d}} (c2);
                \draw[thick,->] (f3) to node[swap,auto] {\rel{d}} (c3);
                \draw[thick,->] (f4) to node[auto] {\rel{d}} (c1);
                \draw[thick,->] (c1) to node[swap,auto] {\rel{m}} (i1);
                \draw[thick,->] (c2) to node[swap,auto] {\rel{m}} (i1);
                \draw[thick,->] (c3) to node[swap,auto] {\rel{m}} (i2);
            \end{scope}
        \end{tikzpicture}
        \label{img:system-graph-fragment4:a}
    }\qquad
     \subfloat[after four requests]{
        \begin{tikzpicture}
            [node distance=1.5cm and 1.6cm, on grid,
            caching/.style={color=blue!70,>=open diamond}, 
            audit-a/.style={color=teal, densely dashed}, 
            audit-d/.style={color=red, densely dotted}, 
            interest-a/.style={color=purple,>=\interestahead}, 
            interest-b/.style={color=brown,>=\interestdhead}, 
            every circle node/.style={draw,minimum width=20pt},thick,
            every node/.append style={scale=0.7, transform shape}]
            \begin{scope}[>=latex] 
                \node[circle] (u1) {\entity{u_1}};
                \node[circle,below=of u1] (e1) {\entity{e_1}};
                \node[circle,left=of e1] (c1) {\entity{c_1}};
                \node[circle,below=of e1] (c2) {\entity{c_2}};
                \node[circle,right=of e1] (c3) {\entity{c_3}};
                \node[circle,above=of c1] (f1) {\entity{f_1}};
                \node[circle,left=of c1] (f4) {\entity{f_4}};
                \node[circle,above=of c3] (f3) {\entity{f_3}};
                \node[circle,below=of c2] (f2) {\entity{f_2}};
                \node[circle,left=of c2] (i1) {\entity{i_1}};
                \node[circle,right=of c2] (i2) {\entity{i_2}};
                \draw[thick,->] (u1) to node[swap,auto,near end] {\rel{w}} (e1);
                \draw[thick,->] (e1) to node[swap,auto] {\rel{s}} (c1);
                \draw[thick,->] (e1) to node[swap,auto] {\rel{s}} (c2);
                \draw[thick,->] (e1) to node[auto] {\rel{s}} (c3);
                \draw[thick,->] (f1) to node[swap,auto,pos=0.35] {\rel{d}} (c1);
                \draw[thick,->] (f2) to node[auto] {\rel{d}} (c2);
                \draw[thick,->] (f3) to node[swap,auto] {\rel{d}} (c3);
                \draw[thick,->] (f4) to node[auto] {\rel{d}} (c1);
                \draw[thick,->] (c1) to node[swap,auto] {\rel{m}} (i1);
                \draw[thick,->] (c2) to node[swap,auto] {\rel{m}} (i1);
                \draw[thick,->] (c3) to node[swap,auto] {\rel{m}} (i2);   	
                \draw[audit-a,thick,->] (u1) to node[swap,auto,color=black] {\rel{\audita{\act{read}}}} (f1);
                \draw[interest-a,thick,->] (u1) to node[color=black,fill=white,inner sep=1pt] {\rel{\interesta}} (c1);
                \draw[interest-b,thick,->] (u1) to [bend right] node[pos=0.7,color=black,,fill=white,inner sep=1pt] {\rel{\interestb}} (c2);
                \draw[audit-d,thick,->] (u1) to [bend left] node[auto,near end,color=black] {\rel{\auditd{\act{read}}}} (f2);
                \draw[audit-a,thick,->] (u1) to node[auto,color=black] {\rel{\audita{\act{read}}}} (f3);
                \draw[audit-a,thick,->] (u1) to node[swap,auto,near end,color=black] {\rel{\audita{\act{read}}}} (f4);
                \draw[interest-a,thick,->] (u1) to node[color=black,fill=white,inner sep=1pt] {\rel{\interesta}} (c3);
              \end{scope}
      \end{tikzpicture}
        \label{img:system-graph-fragment4:c}
    }
    \caption{Enforcing the Chinese Wall policy in RPPM}\label{img:system-graph-fragment4}
\end{figure}
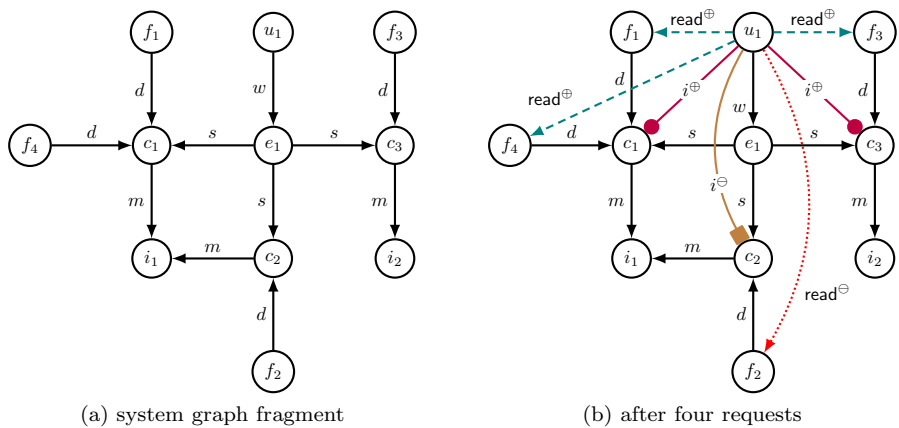

\section{Related Work}\label{sec:related-work}
Our work is motivated by the limitations of role-based access control in respect of the context in which authorization requests are made.
A medical doctor should not be given access to all patients' records simply because she is a doctor; it is only her patients' records to which she requires access.
We believe, as others do, that the relationships between entities within a system are intrinsic to the decision making process~\cite{CarminatiFP09,ChengPS12passat,ChengPS12dbsec,Fong11,ZhangAGC09}.
Whilst this approach has received considerable interest, much of the previous work has focused on applying relationship-based access control to online social networks.
This is no doubt due to the obvious alignment between the relationships in the access control model and the strong focus on interpersonal relationships in such networks.
As we have shown here, however, relationship-based access control can be applied to general-purpose computing systems (as well as in social networks).

Early work in this field focused solely on social networks and considered {\sf friend} and {\sf friend-of-a-friend} relationships to determine access to resources~\cite{KrukGGWC06}, as well as trust relationships between users~\cite{AlViMa07}.
These ideas were drawn together by Carminati~et~al.~\cite{CarminatiFP09} to create an access control model for social networks based on relationships.
This work has more recently been built upon by Hu~et~al.~~\cite{HuAhJo13} to provide joint management of access policies (once again focused on social networks).

Fong~et~al.~provided a richer policy language, based on modal logic~\cite{BrunsFSH12,Fong11} and then hybrid logic~\cite{Fong_ESORICS09}.
This policy language is not directly comparable to that of the RPPM model but there are common elements: both support defaults, relationship labels, positive and negative requirements, and can encode alternation (since the RPPM model may use multiple principal-matching rules for the same principal).
However, the RPPM model's policy language supports unbounded path conditions, which are useful when traversing a sub-graph comprising similar types of elements with paths that may be arbitrarily long (as in a directory tree, for example).
Moreover, we allow for arbitrary types of nodes in the system graph, which allows us to model relationships in general-purpose computing systems.

In terms of policy specification, the work of Cheng~et~al. is the most similar to our own~\cite{ChengPS12passat,ChengPS12dbsec}.
Whilst the focus of their work is again social networks, it allows for the specification of user-to-resource relationships other than ownership.
Our RPPM model is more general still in its support for entities of any kind (including logical ones) and policies not focused on, but still applicable to, social networks.
The policy language used by Cheng~et~al.~is based on a limited form of regular expressions, but limits the Kleene operators to a fixed, short, depth.
It is claimed that such restrictions are appropriate in social networks because of the ``six-degrees-of-separation'' phenomenon, which means the diameter of a social network is very small (compared to the number of nodes it contains).
However, no such assumptions can be made for more general systems, such as those to which the RPPM model can be applied.
We do not bound the Kleene plus operator and are able to support the Kleene star operator through the use of two principal-matching rules.
Finally, Cheng~et~al. do not allow cycles in the social network, when considering request evaluation with respect to their policies.
We see no reason why nodes should not be revisited and make no such restriction.

There has been some interest in recent years in reusing, recycling or caching authorization decisions at policy enforcement points in order to avoid recomputing decisions~\cite{BordersZP05,KohlerBS09,KohlerF09,WeiCB11}.
These techniques have perceived benefit, in particular, in large-scale, distributed, systems due to demands for reduced latency and a resilience to intermittent communications failures.
Whilst caching in the RPPM model does not resolve connectivity issues directly, the capability has considerable impact on latency.
Significantly, caching edges have direct value in the RPPM model, allowing the computationally expensive part of the decision-making process to be bypassed.
Further, a cached edge applies to multiple requests (as it considers the participants but not the action) and so can optimise processing at a level of abstraction above that which is commonly employed by other strategies.
This allows for more intuitive and less specific optimisation strategies and algorithms, which have demonstrable value (as shown in preliminary experimental work by Crampton and Sellwood~\cite{CramptonS14}).
Finally, whilst many authorization recycling strategies involve the policy enforcement point maintaining its own cache, or employing a ``speculative'' system~\cite{KiniB13}, caching in the RPPM model updates the system graph itself and can thus be implemented by very simple extensions to the basic model.

The RPPM model's support for auditing edges provides a natural mechanism through which to record past activity and thus inform future authorization requests.
Brewer and Nash's seminal paper on the Chinese Wall policy~\cite{BrewerN89} led to considerable research into history-based access control, which continues today.
Fong~et~al.~\cite{FoMeKr13} recently proposed a relationship-based model that incorporates temporal operators, enabling the specification and enforcement of history-based policies.
Once again, this model was developed in the context of social networks and cannot support the more general applications for which the RPPM model can be used (such as the Chinese Wall policy).

\section{Conclusion}\label{sec:conclusion}
We have introduced the RPPM model for access control based on the concepts of relationships, paths and principal matching.
We make use of relationships within our graph-based model to make authorization decisions.
By allowing the system graph to contain logical, as well as concrete entities, we enable contextual relationships to inform the decisions.
This is further demonstrated by the model's support for typed edges, thus allowing historical activities to influence the evaluation of current requests.
Through the use of these features, the RPPM model is able to easily accommodate our motivating examples from Section~\ref{sec:intro}.

Our model has rigorous foundations, using polices based on the concept of a path condition---which may be viewed as a restricted form of regular expression.
This, in turn, means we can describe the algorithm for evaluating access requests in terms of non-deterministic finite automata and determine its complexity.
Moreover, the expressiveness of path conditions does not need to be artificially constrained and we can thus represent a much wider range of policies than is possible with other proposals in the literature.
The flexibility of path conditions, comprising a mandated and precluded target, allows for the definition of practical and powerful authorization policies which are intuitive to interpret.
This allows for general computing applications such as file-systems and other tree structures (of arbitrarily large depth) to be modelled effectively.
Moreover, the model's support for caching edges enables the authorization system to bypass the expensive principal-matching algorithm in cases where it has already been performed.
The benefit of caching in the RPPM model is tangible and significant whilst incurring little overhead to implement.

The generality and flexibility of the RPPM model makes it an ideal basis upon which further work can be built.
Audit edges, as well as providing support for the enforcement of separation-of-duty and Chinese wall policies, introduce a natural route into workflow task authorization and stateful resources to which access changes over time.
We also plan to develop an administrative model for the RPPM model by including administrative principals and rules for matching administrative principals to requests that modify the state of the system (such as adding nodes and edges to the system graph).
More speculatively, we hope to consider strategies for partitioning the system graph into sub-graphs each having their own policies and using ``hub'' entities to create ``bridges'' between the sub-graphs.
The intuition is that segmentation of the system graph may hold the key to more space-efficient policy representation, storage and retrieval, as well as more time-efficient request evaluation.

\bibliographystyle{abbrv}
\bibliography{../RPPM}

\end{document}